\DeclareMathOperator{\tr}{tr}
\DeclareMathOperator{\rank}{rank}
\DeclareMathOperator{\mathspan}{span}
\newcommand{\ie}{i.~e.~}
\newcommand{\spec}[1]{\vec{\lambda}(#1)}
\newcommand{\swap}{P_{BB^\prime}}
\newcommand{\swapBE}{P_{BE}}
\newcommand{\ket}[1]{|#1\rangle}
\newcommand{\bra}[1]{\langle#1|}
\newcommand{\proj}[1]{\ket{#1}\bra{#1}}
\newcommand{\ketbra}[2]{\ket{#1}\bra{#2}}
\newcommand{\braket}[2]{\langle #1 | #2 \rangle}
\newcommand{\im}{\mathrm{i}}
\newcommand{\e}{\mathrm{e}}
\begin{document}

\title{Spectrum conditions for symmetric extendible states}
\author{Geir Ove \surname{Myhr}}
\email{gomyhr@iqc.ca}
\author{Norbert L{\"u}tkenhaus}

\affiliation{
Institut f\"{u}r Theoretische Physik I, \& Max-Planck Research Group, Institute of Optics, Information and Photonics, 
Universit\"{a}t Erlangen-N\"{u}rnberg, Staudtstra{\ss}e 7/B2, 91058 Erlangen, Germany\\}

\affiliation{Institute for Quantum Computing \& Department of Physics and Astronomy, University of Waterloo, University Ave.~W.~N2L 3G1, Canada}

\date{\today}

\begin{abstract}
We analyze bipartite quantum states that admit a symmetric extension. Any such state can be decomposed into a convex combination of states that allow a \emph{pure} symmetric extension. A necessary condition for a state to admit a pure symmetric extension is that the spectra of the local and global density matrices are equal. This condition is also sufficient for two qubits, but not for any larger systems. Using this condition we present a conjectured necessary and sufficient condition for a two qubit state to admit symmetric extension, which we prove in some special cases. The results from symmetric extension carry over to degradable and anti-degradable channels and we use this to prove that all degradable channels with qubit output have a qubit environment.
\end{abstract}

\pacs{03.65.Ta, 03.67.Mn}

\maketitle

\newtheorem{theorem}{Theorem}
\newtheorem{lemma}{Lemma}
\newtheorem{corollary}{Corollary}
\newtheorem{proposition}{Proposition}

\theoremstyle{definition}
\newtheorem{conjecture}{Conjecture}
\newtheorem{example}{Example}
\newtheorem{remark}{Remark}


\section{Introduction}


Different bipartite quantum states can be useful for different tasks, and one of the goals of quantum information theory is to find out which properties are required from a state for it to be a useful resource for a given task. Some mathematical properties of the states can tell something about what you can or cannot do with it. For example, if the partial transpose of a state is a positive semidefinite operator it is not possible to distill entanglement from that state, no matter how many copies one has available \cite{horodecki98a}. Similarly, for a state $\rho_{AB}$, if the operator $I_A \otimes \rho_B - \rho_{AB}$ is not positive semidefinite, it \emph{is} possible to distill entanglement from many copies \cite{horodecki99a}. For distilling secret key, the only known precondition is that the state must be entangled \cite{curty04a}, \ie it is not possible to express it as a convex combination of pure product states.


One can consider the tasks of distilling entanglement or secret key using classical communication in one direction. In this work we will consider communication from a party named Alice in possesion of system $A$ to a party named Bob in possesion of system $B$. 
If a state admits a \emph{symmetric extension} to two copies of $B$ none of these tasks will be possible, 
due to the monogamy of entanglement and secret key. The focus in this work is on characterizing the states that admit a symmetric extension.

The bipartite quantum states we consider live on the system $AB$ with the two subsystems $A$ and $B$. The corresponding Hilbert spaces are $\mathcal{H}_{A}$, $\mathcal{H}_{B}$ and $\mathcal{H}_{AB} = \mathcal{H}_{A} \otimes \mathcal{H}_{B}$. We want to extend the system to a system $B^\prime$ which is a copy of $B$, and with an isometry between the two, so that for an operator on or vector in $\mathcal{H}_B$, there is a corresponding one in $\mathcal{H}_{B^\prime}$. The extended system is $ABB^\prime$ with Hilbert space $\mathcal{H}_{ABB^\prime} = \mathcal{H}_{A} \otimes \mathcal{H}_{B} \otimes \mathcal{H}_{B^\prime}$.

Because of the isometry, we can define the swap operator $\swap$ as the unitary operator that interchanges states on the two systems $B$ and $B^\prime$. 
In terms of corresponding orthogonal bases $\swap = \sum_{ij} \ket{ij}\bra{ji}$. 
The swap is a Hermitian operator, since it is unitary and $\swap^2 = I$. 
We say that a state $\rho_{ABB^\prime}$ is \emph{symmetric} if $\rho_{ABB^\prime} = \swap \rho_{ABB^\prime} \swap^\dagger$. 
For the main part of this paper we ignore whether a state has support on the symmetric subspace (states that satisfy $1/2(I+\swap)\rho_{ABB^\prime}1/2(I+\swap)^\dagger$), antisymmetric subspace (states that satisfy $1/2(I-\swap)\rho_{ABB^\prime}1/2(I-\swap)^\dagger$) or both, and in general it will be a mixture between the two (but see appendix \ref{app:bosonicfermionic}).

Finally we say that a bipartite state $\rho_{AB}$ has a \emph{symmetric extension} (or is \emph{symmetric extendible}) if there exists a tripartite state $\sigma_{ABB^\prime}$ such that $\tr_{B^\prime} \sigma_{ABB^\prime} = \rho_{AB}$ and $\sigma_{ABB^\prime} = \swap \sigma_{ABB^\prime} \swap^\dagger$, \ie $\sigma_{ABB^\prime}$ is symmetric.


In general one can consider extensions to $n_A$ copies of system $A$ and $n_B$ copies of system $B$ and this is called a $(n_A,n_B)$-symmetric extension. 
This has been used to derive algorithms for deciding whether a state is entangled or separable \cite{doherty02a}. 
Questions like whether a state admits symmetric extensions can also be formulated as quantum marginal problems \cite{bravyi04a,han05a,butterley06a,liu06a,hall07a}. 
Asking if a state $\rho_{AB}$ has a $(1,N)$-symmetric extension is just a special case of the marginal problem of deciding if there exists a state on the $N+1$ systems $A, B_1, \ldots, B_N$ with given reduced states $\rho_{AB_j}$. 
This becomes a symmetric extension when one demands that all $\rho_{AB_j}$ are equal to the given $\rho_{AB}$ which is to be extended. 
If one such state exists it can always be symmetrized to give a state that is invariant under any permutations of the $B_j$.

Since we are interested in the one-way communication aspect we will only be considering $(1,2)$-symmetric extensions. 
In this setting any state $\rho_{AB}$ where $\rho_A := \tr_B\rho_{AB}$ is maximally mixed corresponds to a channel through the Choi-Jamio{\l}kowski isomorphism, and those that are also symmetric extendible correspond to \emph{anti-degradable} channels \cite{wolf07a}.


The reason symmetric extension is interesting in a one-way classical communications setting, is that no matter what operations Alice and Bob perform, the state will keep a symmetric extension if communication from Bob to Alice is not allowed. 
\begin{lemma}
 (Nowakowski and Horodecki \cite{nowakowski07suba}) Let $\Lambda$ be a (not necessarily trace-preserving) quantum operation that can be realized with 1-LOCC, \ie it is of the form 
\begin{equation}
  \Lambda(\rho) = \sum_{ij}(I \otimes B_{ij})(A_i \otimes I) \rho (A_i \otimes I)^\dagger (I \otimes B_{ij})^\dagger
\end{equation}
where $\sum_{i} A_i^\dagger A_i \le I$ and $\sum_j B_{ij}^\dagger B_{ij} = I$ for all $i$ since Bob cannot communicate the outcome of a probabilistic operation back to Alice. 

If $\rho_{AB}$ admits a symmetric extension of the system $B$, then so does $\Lambda(\rho_{AB})$.
\end{lemma}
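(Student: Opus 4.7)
The plan is to start from a symmetric extension $\sigma_{ABB'}$ of $\rho_{AB}$ and build a symmetric extension of $\Lambda(\rho_{AB})$ by ``applying $\Lambda$ in a symmetrized way'' on the extension. The key observation is that, in the 1-LOCC protocol, Alice broadcasts her outcome $i$ in classical form, so both copies of Bob (on $B$ and on $B'$) can see the same $i$, but they then perform their own local quantum operations with independent outcomes $j$ and $j'$.

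Concretely, I would define
\begin{equation}
\widetilde{\Lambda}(\sigma_{ABB'}) = \sum_{i,j,j'} \bigl(A_i \otimes B_{ij} \otimes B_{ij'}\bigr)\, \sigma_{ABB'} \,\bigl(A_i \otimes B_{ij} \otimes B_{ij'}\bigr)^\dagger,
\end{equation}
where the middle tensor factor acts on $B$ and the rightmost on $B'$. The claim is that $\widetilde{\Lambda}(\sigma_{ABB'})$ is a symmetric extension of $\Lambda(\rho_{AB})$, up to normalization that carries over trivially because $\tr[\widetilde{\Lambda}(\sigma_{ABB'})] = \tr[\Lambda(\rho_{AB})]$.

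There are two things to check. First, symmetry under $\swap$: since the sum ranges independently over $j$ and $j'$, conjugating by $\swap$ merely interchanges the roles of the $B_{ij}$ and $B_{ij'}$ Kraus operators and the $B, B'$ tensor slots; combined with $\swap \sigma_{ABB'} \swap^\dagger = \sigma_{ABB'}$ and relabeling $j \leftrightarrow j'$, the whole expression is invariant. Second, correctness of the marginal: tracing out $B'$ and using cyclicity of the partial trace on that slot brings the $B_{ij'}^\dagger B_{ij'}$ factors together, and the completeness condition $\sum_{j'} B_{ij'}^\dagger B_{ij'} = I$ (valid for each fixed $i$) collapses the $j'$ sum to identity. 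What remains is
\begin{equation}
\sum_{i,j} (A_i \otimes B_{ij})\,(\tr_{B'} \sigma_{ABB'})\,(A_i \otimes B_{ij})^\dagger = \Lambda(\rho_{AB}),
\end{equation}
using $\tr_{B'}\sigma_{ABB'} = \rho_{AB}$.

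I do not expect any serious obstacle; the only subtle point is to resist the temptation to let both copies of Bob share the same outcome $j$, which would break the argument because Bob on $B$ cannot communicate his measurement result to the fictitious Bob on $B'$. Letting $j$ and $j'$ be independent is precisely the step that uses the ``one-way'' (Alice-to-Bob) structure of the protocol and is what makes the construction work.
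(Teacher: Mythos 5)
Your construction is correct: $\widetilde{\Lambda}(\sigma_{ABB'}) = \sum_{i,j,j'}(A_i \otimes B_{ij} \otimes B_{ij'})\,\sigma_{ABB'}\,(A_i \otimes B_{ij} \otimes B_{ij'})^\dagger$ is manifestly positive, is $\swap$-invariant by the relabeling $j \leftrightarrow j'$, and traces down to $\Lambda(\rho_{AB})$ because $\sum_{j'} B_{ij'}^\dagger B_{ij'} = I$ for each fixed $i$; the equal-trace remark handles the renormalization needed since $\Lambda$ may be trace-decreasing. The paper itself states this lemma without proof (citing Nowakowski and Horodecki), and your argument — letting the two copies of Bob see Alice's outcome $i$ but apply their trace-preserving instruments independently — is exactly the standard one, with the crucial point (independence of $j$ and $j'$ as the place where one-wayness enters) correctly identified.
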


An interesting special case is when Alice performs an invertible filter operation and Bob performs a unitary. Then the operation can be reversed with non-zero probability, so the output state admits a symmetric extension if \emph{and only if} the input state admits one. 

Knowing when a state admits a symmetric extension can also be useful in the analysis of two-way distillation protocols for entanglement or secret key. A two-way protocol consists of a finite number of one-way rounds going in alternating directions. Before the last round, the state cannot have a symmetric extension to two copies of the receiving party's system if the protocol is to succeed \cite{myhr08suba}. 

This paper is organized as follows. In section \ref{sec:pure-extendible-decomp} we show that any state with symmetric extension can be written as a convex combination of states with pure symmetric extension. In section \ref{sec:condition} we give a necessary condition for a state to have a pure symmetric extension. This condition is proved to be sufficient for two qubits in section \ref{sec:necsufqubit} and section \ref{sec:counterex} shows that this is not true for any higher dimension. In section \ref{sec:2qubits} we give a conjectured necessary and sufficient condition for a 2-qubit state, which we prove in some special cases. The techniques from the previous sections are applied to anti-degradable and degradable channels in section \ref{sec:degradable}.


\section{Decomposition into pure-symmetric extendible states}
\label{sec:pure-extendible-decomp}

Separable quantum states are those states that can be written as convex combinations of product states $\rho_A \otimes \rho_B$ and they can even be decomposed further into convex combinations of pure product states. I.~e.
\begin{equation}
  \rho_\text{sep} = \sum_j p_j \proj{\psi_j} \otimes \proj{\phi_j}.
\end{equation}
Although it can be difficult to determine whether or not a given state can be written on this form or not --- and if it can, to find some $\ket{\psi_j}$ and $\ket{\phi_j}$ explicitly --- the fact that all separable states can be written like this allows us to prove properties of separable states in general. 

One may ask if there is an analog to this for states that allow for a symmetric extension. Clearly, it is not true that any $\rho_{AB}$ that allows for symmetric extension can be decomposed into pure states with the same property. This is because the only pure states that allow for symmetric extension are the product states, and their convex hull is the set of separable states. But it turns out that if we consider the \emph{extended} states --- the $\rho_{ABB^\prime}$ that are invariant under exchange of $B$ and $B^\prime$ --- they can be written as convex combinations of pure states with the same property. In fact, the pure states in the spectral decomposition can be chosen to have this property.

\begin{lemma}
\label{lem:symspecdecomp}
A tripartite state $\rho_{ABB^\prime}$ which is invariant under exchange of $B$ and $B^\prime$, $\rho_{ABB^\prime} = P_{BB^\prime} \rho_{ABB^\prime} P_{BB^\prime}^\dagger$, can be written in the spectral decomposition
\begin{equation}
  \rho_{ABB^\prime} = \sum_j \lambda_j \proj{\phi_j}
\end{equation} 
in such a way that $\proj{\phi_j} = P_{BB^\prime} \proj{\phi_j} P_{BB^\prime}^\dagger$, \ie $P_{BB^\prime} \ket{\phi_j} = \pm \ket{\phi_j}$.
\end{lemma}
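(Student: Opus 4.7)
The plan is to exploit that the symmetry condition is equivalent to $\rho_{ABB'}$ commuting with the swap operator $P_{BB'}$, and then to invoke simultaneous diagonalization of commuting Hermitian operators.

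First I would observe that $P_{BB'}$ is Hermitian with $P_{BB'}^2 = I$, so $P_{BB'}^\dagger = P_{BB'} = P_{BB'}^{-1}$, and its spectrum consists of $\pm 1$. The full Hilbert space $\mathcal{H}_{ABB'}$ therefore splits orthogonally as $\mathcal{H}_+ \oplus \mathcal{H}_-$, where $\mathcal{H}_\pm$ are the $\pm 1$ eigenspaces of $P_{BB'}$. Next, I would rewrite the invariance hypothesis $\rho_{ABB'} = P_{BB'} \rho_{ABB'} P_{BB'}^\dagger = P_{BB'} \rho_{ABB'} P_{BB'}$ as a commutation relation by right-multiplying both sides by $P_{BB'}$, which gives $\rho_{ABB'} P_{BB'} = P_{BB'} \rho_{ABB'}$.

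Since $\rho_{ABB'}$ and $P_{BB'}$ are both Hermitian and commute, they can be simultaneously diagonalized in an orthonormal basis $\{\ket{\phi_j}\}$. Each $\ket{\phi_j}$ is then a joint eigenvector, in particular $P_{BB'}\ket{\phi_j} = \pm \ket{\phi_j}$, so
\begin{equation}
P_{BB'} \proj{\phi_j} P_{BB'}^\dagger = (\pm 1)^2 \proj{\phi_j} = \proj{\phi_j},
\end{equation}
which is the required property. Expanding $\rho_{ABB'}$ in this basis yields the claimed spectral decomposition with symmetric or antisymmetric eigenvectors.

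The only point that deserves a little care is degeneracy in the spectrum of $\rho_{ABB'}$: within each eigenspace of $\rho_{ABB'}$, the spectral decomposition is non-unique. However, since $P_{BB'}$ preserves every eigenspace of $\rho_{ABB'}$ (as they commute) and acts there as a Hermitian involution, one can always choose an orthonormal basis of that eigenspace consisting of $\pm 1$ eigenvectors of $P_{BB'}$, for example by diagonalizing the restriction of $P_{BB'}$ to that eigenspace. I do not foresee any real obstacle; the entire content of the lemma is that the swap and the state can be simultaneously diagonalized.
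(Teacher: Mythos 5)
Your proof is correct and follows essentially the same route as the paper: rewrite the invariance condition as $[\rho_{ABB'},P_{BB'}]=0$ and simultaneously diagonalize the two commuting Hermitian operators, using $P_{BB'}^2=I$ to get eigenvalues $\pm 1$. Your extra remark about handling degenerate eigenspaces of $\rho_{ABB'}$ is a welcome bit of care that the paper's one-line proof glosses over.
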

\begin{proof}
  Since $\rho_{ABB^\prime} = \swap \rho_{ABB^\prime} \swap^\dagger$, $\rho_{ABB^\prime} \swap = \swap \rho_{ABB^\prime}$, so $\rho_{ABB^\prime}$ and $\swap$ are commuting diagonalizable operators and therefore have a common set of eigenvectors. Since $\swap^2 = I$, $\swap$ has eigenvalues $\pm 1$ and all its eigenvectors therefore satisfy $\swap \ket{\phi} = \pm \ket{\phi}$.
\end{proof}

The above lemma applies to the extended state $\rho_{ABB^\prime}$, but our main interest is for states $\rho_{AB}$ that \emph{admit} a symmetric extension. By tracing out the $B^\prime$ system we get
\begin{corollary}
\label{cor:pure-extendible-decomp}
  A bipartite quantum state $\rho_{AB}$ admits a symmetric extension if and only if it can be written as a convex combination 
  \begin{equation}
  \rho_{AB} = \sum_j p_j \rho_{AB}^j; \quad 0 \le p_j \le 1; \quad \sum_j p_j = 1
  \end{equation}
  of states $\rho_{AB}^j$ which allow a \emph{pure} symmetric extension.
\end{corollary}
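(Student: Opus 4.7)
The plan is to prove the two directions of the ``if and only if'' separately, with Lemma \ref{lem:symspecdecomp} doing the heavy lifting in one direction.

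For the easier direction ($\Leftarrow$), I would start from a decomposition $\rho_{AB} = \sum_j p_j \rho_{AB}^j$ in which each $\rho_{AB}^j$ has a pure symmetric extension $\proj{\phi_j}$ on $\mathcal{H}_{ABB^\prime}$, \ie $\tr_{B^\prime}\proj{\phi_j} = \rho_{AB}^j$ and $\swap \proj{\phi_j} \swap^\dagger = \proj{\phi_j}$. Then I would simply observe that $\sigma_{ABB^\prime} := \sum_j p_j \proj{\phi_j}$ is a valid density operator (a convex combination of pure states), is invariant under the swap by linearity, and satisfies $\tr_{B^\prime} \sigma_{ABB^\prime} = \rho_{AB}$ by linearity of the partial trace. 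So $\rho_{AB}$ admits a symmetric extension.

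For the harder but still straightforward direction ($\Rightarrow$), I would assume $\rho_{AB}$ admits a symmetric extension $\sigma_{ABB^\prime}$. By Lemma \ref{lem:symspecdecomp}, $\sigma_{ABB^\prime}$ can be written in a spectral decomposition $\sigma_{ABB^\prime} = \sum_j \lambda_j \proj{\phi_j}$ where each $\ket{\phi_j}$ is an eigenvector of $\swap$ with eigenvalue $\pm 1$, and hence each $\proj{\phi_j}$ is itself symmetric under exchange of $B$ and $B^\prime$. Tracing out $B^\prime$ then gives
\begin{equation}
\rho_{AB} = \tr_{B^\prime} \sigma_{ABB^\prime} = \sum_j \lambda_j \, \tr_{B^\prime} \proj{\phi_j}.
\end{equation}
Defining $\rho_{AB}^j := \tr_{B^\prime}\proj{\phi_j}$ and $p_j := \lambda_j$, I get the desired convex combination: the $\lambda_j$ are nonnegative eigenvalues of a density operator summing to $1$, and each $\rho_{AB}^j$ by construction has the pure symmetric extension $\proj{\phi_j}$.

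There is no genuine obstacle here since the whole content is already packaged in Lemma \ref{lem:symspecdecomp}; the corollary is essentially just the statement of that lemma after a partial trace, plus the trivial converse. The only thing to double-check is that I am not accidentally using the stronger assumption that $\sigma_{ABB^\prime}$ be supported on the symmetric subspace --- the lemma correctly allows antisymmetric eigenvectors ($\swap\ket{\phi_j} = -\ket{\phi_j}$) as well, and $\proj{\phi_j}$ is still a legitimate pure symmetric extension in that case since it is the sign-squared projector that matters.
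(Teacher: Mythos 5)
Your proposal is correct and matches the paper's own (largely implicit) argument: the paper derives the corollary by applying Lemma \ref{lem:symspecdecomp} to the extension and tracing out $B^\prime$, exactly as you do, with the converse direction being the obvious convexity statement. Your closing remark that antisymmetric eigenvectors ($\swap\ket{\phi_j} = -\ket{\phi_j}$) still yield valid pure symmetric extensions under the paper's definition is the right point to check, and it holds since only $\swap\proj{\phi_j}\swap^\dagger = \proj{\phi_j}$ is required.
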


Hence, all the extremal states in the convex set of symmetric extendible states are extendible to pure states. We will call those states \emph{pure-extendible}. In the next section 
we give a simple necessary condition for a state to be pure-extendible, and in the following sections we show that it is sufficient if and only if it is a state on two qubits.


\section{The spectrum condition}
\label{sec:condition}

Let $\spec{\rho}$ denote the vector of non-zero eigenvalues of $\rho$ in non-increasing order.

\begin{theorem}
  \label{thm:spectrumcondition}
  Let $\rho_{AB}$ be a state that allows a pure symmetric extension to $\proj{\psi}_{ABB^\prime}$. Then 
  \begin{equation}
    \label{eq:spectrumcondition}
    \spec{\rho_{AB}} = \spec{\rho_B}
  \end{equation}
\end{theorem}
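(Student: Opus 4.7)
The plan is to obtain the conclusion directly from the Schmidt decomposition of the pure extension, combined with the swap symmetry. The key observation is that equality of spectra between the marginals of a pure state is a standard consequence of the Schmidt decomposition, and the swap symmetry lets us identify the two relevant marginals.

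First, I would apply the Schmidt decomposition to $\ket{\psi}_{ABB^\prime}$ across the bipartition $AB \,|\, B^\prime$. This immediately yields
\begin{equation}
\spec{\rho_{AB}} = \spec{\rho_{B^\prime}},
\end{equation}
where $\rho_{AB} = \tr_{B^\prime}\proj{\psi}$ and $\rho_{B^\prime} = \tr_{AB}\proj{\psi}$, since the two reduced states of a pure bipartite state share the same non-zero spectrum.

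Second, I would use the symmetry of the extension. By the hypothesis (or equivalently by Lemma~\ref{lem:symspecdecomp} applied to the rank-one state $\proj{\psi}$), we have $\swap \proj{\psi} \swap^\dagger = \proj{\psi}$. Taking the partial trace over $A$ and one of the copies of $B$ on both sides, and using that $\swap$ exchanges $B$ and $B^\prime$, gives $\rho_B = \rho_{B^\prime}$, and hence $\spec{\rho_B} = \spec{\rho_{B^\prime}}$.

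Chaining the two equalities yields $\spec{\rho_{AB}} = \spec{\rho_{B^\prime}} = \spec{\rho_B}$, which is the claim. There is no real obstacle here: both ingredients (Schmidt decomposition and the identification of the two $B$-marginals via swap invariance) are immediate, and the only bookkeeping issue is to keep the Schmidt bipartition ($AB$ versus $B^\prime$) distinct from the symmetry bipartition ($B$ versus $B^\prime$). Because $\spec{\cdot}$ is defined to list only non-zero eigenvalues, no assumption on the support of $\rho_{AB}$ or $\rho_B$ is needed.
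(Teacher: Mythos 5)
Your proposal is correct and follows essentially the same route as the paper's own proof: the Schmidt decomposition across the $AB\,|\,B^\prime$ cut gives $\spec{\rho_{AB}} = \spec{\rho_{B^\prime}}$, and the swap symmetry gives $\rho_B = \rho_{B^\prime}$. The only difference is presentational — you make the step $\rho_B=\rho_{B^\prime}$ slightly more explicit by taking partial traces of the swap-invariance identity, which the paper leaves as "by symmetry."
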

\begin{proof}
  Using the Schmidt decomposition with the splitting $AB|B^\prime$ we can write the extended state as
  \begin{equation}
    \ket{\psi}_{ABB^\prime} = \sum_j \sqrt{\lambda_j} \ket{\phi_j}_{AB} \ket{j}_{B^\prime}.
  \end{equation}
  The reduced density matrices of this state are
  \begin{equation}
    \rho_{AB} = \sum_j \lambda_j \proj{\phi_j} \qquad \rho_{B^\prime} = \sum \lambda_j \proj{j}
  \end{equation}
  \ie the spectra of $\rho_{AB}$ and $\rho_{B^\prime}$ are equal. By symmetry between $B$ and $B^\prime$, $\rho_B = \rho_{B^\prime}$ so $\spec{\rho_{AB}} = \spec{\rho_{B}}$.
\end{proof}

In general, we don't expect all states that satisfy condition \eqref{eq:spectrumcondition} to have a pure symmetric extension. The following corollary provides a test that can rule out a pure-symmetric extension.
\begin{corollary}
For any state $\rho_{AB}$ that has a pure symmetric extension and any operator $M$ on $\mathcal{H}_A$, the (unnormalized) state 
\begin{equation}
  \widetilde{\rho}_{AB} = (M \otimes I_B) \rho_{AB} (M \otimes I_B)^\dagger
\end{equation}
satisfies condition \eqref{eq:spectrumcondition}.
\end{corollary}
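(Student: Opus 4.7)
The plan is to lift the operator $M$ from $\rho_{AB}$ to the pure extension itself and verify that the resulting (possibly unnormalized) pure state is still a symmetric extension of $\widetilde{\rho}_{AB}$, after which Theorem~\ref{thm:spectrumcondition} applies directly.

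Concretely, let $\ket{\psi}_{ABB^\prime}$ be a pure symmetric extension of $\rho_{AB}$. First I would define the (unnormalized) vector
\begin{equation}
  \ket{\widetilde{\psi}}_{ABB^\prime} = (M \otimes I_B \otimes I_{B^\prime}) \ket{\psi}_{ABB^\prime},
\end{equation}
and observe that tracing $B^\prime$ out of $\proj{\widetilde{\psi}}$ returns exactly $\widetilde{\rho}_{AB}$, since $M$ acts only on $A$ and commutes with the partial trace over $B^\prime$.

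Second, I would check that $\proj{\widetilde{\psi}}$ is still invariant under $\swap$. By Lemma~\ref{lem:symspecdecomp} applied to the rank-one state $\proj{\psi}$, we may take $\swap \ket{\psi} = \pm \ket{\psi}$. Because $\swap$ acts trivially on $A$, it commutes with $M \otimes I_B \otimes I_{B^\prime}$, so $\swap \ket{\widetilde{\psi}} = \pm \ket{\widetilde{\psi}}$ and therefore $\proj{\widetilde{\psi}}$ is symmetric under exchange of $B$ and $B^\prime$. Thus $\ket{\widetilde{\psi}}/\|\widetilde{\psi}\|$ is a bona fide pure symmetric extension of the normalized state $\widetilde{\rho}_{AB}/\tr\widetilde{\rho}_{AB}$ (the degenerate case $\ket{\widetilde{\psi}}=0$ is trivial, as then $\widetilde{\rho}_{AB}=0$).

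Third, applying Theorem~\ref{thm:spectrumcondition} to this normalized extension gives $\spec{\widetilde{\rho}_{AB}/\tr\widetilde{\rho}_{AB}} = \spec{\widetilde{\rho}_B/\tr\widetilde{\rho}_B}$. Since $\tr\widetilde{\rho}_{AB} = \tr\widetilde{\rho}_B$, multiplying both eigenvalue vectors by this common positive scalar recovers $\spec{\widetilde{\rho}_{AB}} = \spec{\widetilde{\rho}_B}$, which is precisely condition~\eqref{eq:spectrumcondition} for $\widetilde{\rho}_{AB}$. There is no real obstacle here; the only subtlety worth flagging explicitly is that $M$ need not be invertible or trace-preserving, so one must be careful to treat $\widetilde{\rho}_{AB}$ as unnormalized throughout and argue the spectrum equality via the matching traces of the two marginals.
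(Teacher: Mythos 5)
Your proof is correct and follows essentially the same route as the paper: apply $M$ to the pure extension, use that $M\otimes I_B\otimes I_{B^\prime}$ commutes with $\swap$ to see that $M\ket{\psi}_{ABB^\prime}$ remains a symmetric extension of $\widetilde{\rho}_{AB}$, and then invoke Theorem~\ref{thm:spectrumcondition}. Your explicit handling of the normalization and the degenerate case $\ket{\widetilde{\psi}}=0$ is a small tidiness bonus the paper leaves implicit.
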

\begin{proof}
Let $\ket{\psi}_{ABB'} = \pm \swap \ket{\psi}_{ABB'}$ be the pure symmetric extension of $\rho_{AB}$. 
The filter $M$ acts only on $\mathcal{H}_A$, so it commutes with $\swap$.
Therefore $M \ket{\psi}_{ABB'} = \pm M \swap \ket{\psi}_{ABB'} = \pm \swap M \ket{\psi}_{ABB'}$, 
so $M \ket{\psi}_{ABB'}$ is a symmetric extension of its reduced state $\widetilde{\rho}_{AB}$. 
Because of theorem \ref{thm:spectrumcondition}, $\widetilde{\rho}_{AB}$ then satisfies \eqref{eq:spectrumcondition}.
\end{proof}

This condition is useful, since if given a state 
that is not pure-extendible
but satisfies condition \eqref{eq:spectrumcondition},
applying a random filter on system $A$ will usually break the condition and reveal that it is not pure-extendible.


\section{Sufficiency for two qubits}
\label{sec:necsufqubit}

In this section it is shown that if $\rho_{AB}$ is a 2-qubit state and satisfies $\spec{\rho_{AB}} = \spec{\rho_{B}}$, then there exists a pure state $\ket{\psi}_{ABB^\prime}$ such that $\ket{\psi}_{ABB^\prime} = \swap \ket{\psi}_{ABB^\prime}$. We first start by giving an equivalent condition to the spectrum condition.
\begin{lemma}
  \label{Bsymmetric}
Given a bipartite state $\rho_{AB}$. Then $\spec{\rho_{AB}} = \spec{\rho_{B}}$ if and only if there exists a pure tripartite state $\ket{\psi}_{ABB^\prime}$ with reductions  $\rho_{AB}$, $\rho_B$ and $\rho_{B^\prime}$ where $\rho_B = \rho_{B^\prime}$.
\end{lemma}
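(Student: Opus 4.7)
The plan is to reduce both directions to the Schmidt decomposition across the bipartition $AB\,|\,B^\prime$.

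The forward direction ($\Leftarrow$) is immediate. Given a pure $\ket{\psi}_{ABB^\prime}$ with reductions $\rho_{AB}$, $\rho_B$ and $\rho_{B^\prime} = \rho_B$, the Schmidt decomposition across $AB\,|\,B^\prime$ implies that $\rho_{AB}$ and $\rho_{B^\prime}$ share the same multiset of nonzero eigenvalues, so $\spec{\rho_{AB}} = \spec{\rho_{B^\prime}} = \spec{\rho_B}$.

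For the reverse direction ($\Rightarrow$), the idea is to construct an explicit purification of $\rho_{AB}$ whose $B^\prime$ marginal equals $\rho_B$; the spectrum condition provides exactly the room required. Take the spectral decompositions, restricted to nonzero eigenvalues arranged in non-increasing order,
\begin{equation*}
\rho_{AB} = \sum_{j=1}^{r} \lambda_j \proj{\phi_j}_{AB}, \qquad \rho_B = \sum_{j=1}^{r^\prime} \mu_j \proj{f_j}_B.
\end{equation*}
The hypothesis $\spec{\rho_{AB}} = \spec{\rho_B}$ gives $r = r^\prime$ and $\lambda_j = \mu_j$. Using the isometric identification of $\mathcal{H}_B$ with $\mathcal{H}_{B^\prime}$ that defines $\swap$, let $\ket{f_j}_{B^\prime}$ denote the image of $\ket{f_j}_B$, and set
\begin{equation*}
\ket{\psi}_{ABB^\prime} = \sum_{j=1}^{r} \sqrt{\lambda_j}\, \ket{\phi_j}_{AB} \otimes \ket{f_j}_{B^\prime}.
\end{equation*}
Since $\{\ket{\phi_j}_{AB}\}$ and $\{\ket{f_j}_{B^\prime}\}$ are each orthonormal, this vector is already in Schmidt form across $AB\,|\,B^\prime$, so it is normalized and purifies $\rho_{AB}$. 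The marginal on $B$ is automatically $\tr_A\rho_{AB} = \rho_B$, while the marginal on $B^\prime$ is $\sum_j \lambda_j \proj{f_j}_{B^\prime}$, which equals $\rho_B$ under the identification, as required.

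No step is a real obstacle. The two mild technical points --- possible eigenvalue degeneracies (so that the $\ket{\phi_j}$ and $\ket{f_j}$ are not unique) and whether the number of Schmidt terms fits inside $\mathcal{H}_{B^\prime}$ --- are resolved respectively by noting that any valid eigenbasis choice works, and by the rank bound $r = \rank \rho_B \le \dim \mathcal{H}_B = \dim \mathcal{H}_{B^\prime}$ that is itself a consequence of the spectrum condition.
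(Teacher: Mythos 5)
Your proposal is correct and follows essentially the same route as the paper: the explicit purification $\sum_j \sqrt{\lambda_j}\,\ket{\phi_j}_{AB}\ket{f_j}_{B'}$ for one direction, and the Schmidt decomposition across $AB\,|\,B'$ (equivalently, the equality of reduced spectra of a pure state) for the other. The added remarks about degeneracy and the rank bound are fine but not needed beyond what the paper already implicitly uses.
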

\begin{proof}
  Assume $\spec{\rho_{AB}} = \spec{\rho_{B}} = (\lambda_j)$. We can write the states in the spectral decomposition, $\rho_{AB} = \sum_j \lambda_j \proj{\varphi_j}$, $\rho_{B} = \sum_j \lambda_j \proj{b_j}$. Then a purification of $\rho_{AB}$ is 
\begin{equation}
  \label{eq:pureABBprime}
  \ket{\psi}_{ABB^\prime} = \sum_j \sqrt{\lambda_j} \ket{\varphi_j}_{AB} \ket{b_j}_{B^\prime}.
\end{equation}
Tracing out the $AB$ system we get $\rho_{B^\prime} = \sum_j \lambda_j \proj{b_j} = \rho_{B}$.

Conversely, assume that there exists a pure (not necessarily symmetric) extension of $\rho_{AB}$, $\ket{\psi}_{ABB^\prime}$ with the reduced states $\rho_B = \rho_{B^\prime}$. In the spectral decomposition, $\rho_{B} = \rho_{B^\prime} = \sum_j \lambda_j \proj{b_j}$. A purification of $\rho_{B^\prime}$ to $ABB'$ is \eqref{eq:pureABBprime}, and the spectrum of $\rho_{AB}$ is $(\lambda_j)$, just like $\rho_B$.
\end{proof}

\begin{theorem}
\label{thm:necsufqubit}
For a two qubit state, $\spec{\rho_{AB}} = \spec{\rho_{B}}$ is a necessary and sufficient condition for it to have a pure symmetric extension.
\end{theorem}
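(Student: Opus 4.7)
The plan is to use Lemma \ref{Bsymmetric} to produce a candidate purification $\ket{\psi}_{ABB'}$ satisfying $\rho_{B'}=\rho_B$, and then to show that this purification is already invariant under \swap{} once the two remaining phase freedoms are chosen appropriately. If $\rank\rho_{AB}=1$ then $\rho_{AB}$ is pure, so the spectrum condition forces $\rho_B$ pure as well, $\rho_{AB}=\proj{a}_A\otimes\proj{b}_B$ is a product state, and $\ket{a}_A\ket{b}_B\ket{b}_{B'}$ is the desired extension. The substantive case is $\rank\rho_{AB}=2$; first assume the two nonzero eigenvalues $\lambda_1\neq\lambda_2$ are distinct and diagonalize $\rho_B=\sum_j\lambda_j\proj{b_j}$. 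Non-degeneracy forces every purification of $\rho_{AB}$ to $ABB'$ with $\rho_{B'}=\rho_B$ to have the form
\[
\ket{\psi}=\sum_j\sqrt{\lambda_j}\,e^{\im\theta_j}\ket{\varphi_j}_{AB}\ket{b_j}_{B'},
\]
where $\{\ket{\varphi_j}\}$ are the eigenvectors of $\rho_{AB}$, unique up to phase. Expanding $\ket{\varphi_j}=\sum_k\ket{\alpha_{j,k}}_A\ket{b_k}_B$ in the $\rho_B$-eigenbasis and applying \swap{}, the condition $\swap\ket{\psi}=+\ket{\psi}$ collapses to the single vector equation $\sqrt{\lambda_2}\,e^{\im\theta_2}\ket{\alpha_{2,1}}=\sqrt{\lambda_1}\,e^{\im\theta_1}\ket{\alpha_{1,2}}$; the matching diagonal entries of $\rho_B=\sum_j\lambda_j\tr_A\proj{\varphi_j}$ give $\sqrt{\lambda_2}\,\|\ket{\alpha_{2,1}}\|=\sqrt{\lambda_1}\,\|\ket{\alpha_{1,2}}\|$ for free, so, since the two phases $\theta_j$ are free, everything reduces to showing that $\ket{\alpha_{1,2}}$ and $\ket{\alpha_{2,1}}$ are parallel as vectors in $\mathcal{H}_A$.

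This proportionality is the crux of the argument and the one place where the two-qubit hypothesis ($\dim\mathcal{H}_A=2$) really bites. My plan is to use the vanishing of the off-diagonal entry of $\rho_B$ together with the orthonormality $\braket{\varphi_1}{\varphi_2}=0$ to eliminate the remaining two columns $\ket{\alpha_{1,1}}$ and $\ket{\alpha_{2,2}}$, and then to apply the Lagrange identity $|\vec{a}^\dagger\vec{b}|^2+|a_1b_2-a_2b_1|^2=\|\vec{a}\|^2\|\vec{b}\|^2$ (which holds precisely in $\mathbb{C}^2$) to the resulting norm expression, combined with $\braket{\varphi_j}{\varphi_j}=1$. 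I expect this to yield the identity $\|\ket{\alpha_{2,1}}\|^2=\|\ket{\alpha_{1,2}}\|^2$, which is inconsistent with the norm condition above unless $\lambda_1=\lambda_2$ or $\ket{\alpha_{1,2}}\parallel\ket{\alpha_{2,1}}$; ruling out the former by assumption forces the proportionality, and the phases $\theta_1,\theta_2$ can then be adjusted to turn the candidate into a pure symmetric extension.

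The remaining degenerate case $\lambda_1=\lambda_2=1/2$ (equivalently $\rho_B=I_B/2$) has to be treated separately because the purification from Lemma \ref{Bsymmetric} no longer has a forced form. Here the spectral decomposition of $\rho_{AB}$ carries an extra $U(2)$ ambiguity and any orthonormal basis of $\mathcal{H}_{B'}$ diagonalizes $\rho_{B'}=I_B/2$, so I would close this case either by using the additional freedom to realize the proportionality directly, or via a closure argument appealing to the density of non-degenerate states among rank-two states satisfying the spectrum condition together with compactness of the set of pure symmetric extensions.
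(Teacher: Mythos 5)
Your treatment of the generic case is, at bottom, the paper's own argument in a cleaner basis-free dress: the paper also passes to the eigenbasis of $\rho_B$, expands the purification there, and shows (its Appendix~\ref{app:sufcalculation}) that $\rho_B=\rho_{B'}$ diagonal and non-degenerate forces the two ``cross'' amplitude pairs to agree in modulus up to a single common phase correctable on $B'$ --- which is exactly your statement that $\ket{\alpha_{1,2}}$ and $\ket{\alpha_{2,1}}$ are proportional. Your reduction steps (rank-1 case, forced Schmidt form of the purification under non-degeneracy, collapse of $\swap\ket{\psi}=\ket{\psi}$ to one vector equation, the free norm matching $\sqrt{\lambda_2}\,\|\alpha_{2,1}\|=\sqrt{\lambda_1}\,\|\alpha_{1,2}\|$) are all correct, and the elimination you sketch does go through: writing $u,v,w,z$ for $\ket{\alpha_{1,1}},\ket{\alpha_{1,2}},\ket{\alpha_{2,1}},\ket{\alpha_{2,2}}$, orthonormality gives $\braket{u}{w}=-\braket{v}{z}$ and the vanishing off-diagonal of $\rho_B$ gives $\lambda_1\braket{v}{u}=-\lambda_2\braket{z}{w}$; if $v,w$ were independent they would span $\mathcal{H}_A=\mathbb{C}^2$, so $\|u\|^2$ and $\|z\|^2$ can be expanded through the inverse Gram matrix of $\{v,w\}$ (invertible precisely because $|\det[v\;w]|^2=\|v\|^2\|w\|^2-|\braket{v}{w}|^2\neq0$, your Lagrange identity), and using $\lambda_1\|v\|^2=\lambda_2\|w\|^2$ the cross terms cancel to leave $\lambda_1\|u\|^2=\lambda_2\|z\|^2$; summed with the previous relation and the normalizations this forces $\lambda_1=\lambda_2$, contradicting non-degeneracy. (So the identity you actually land on is $\lambda_1\|u\|^2=\lambda_2\|z\|^2$ rather than $\|v\|^2=\|w\|^2$ directly, but the contradiction is the same, and the two-qubit hypothesis enters exactly where you said.)

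The genuine gap is the degenerate case $\rho_B=I_B/2$, which you leave to one of two unexecuted options. The density route is not as innocent as it looks: the obvious perturbation $(\tfrac12+t)\proj{\psi_0}+(\tfrac12-t)\proj{\psi_1}$ splits the global spectrum by $\pm t$ but the marginal spectrum only by $\pm t\cdot\tfrac12\|\tr_A\proj{\psi_0}-\tr_A\proj{\psi_1}\|_1$, which is strictly less than $\pm t$ unless the reduced eigenvector marginals are orthogonal (for $\ket{\psi_j}=\ket{\Phi^\pm}$ it is zero), so it leaves the spectrum-condition set. One must perturb eigenvectors as well, and showing this is always possible essentially requires the structure of rank-2 states of $BB'$ with maximally mixed marginals --- which is precisely what the paper uses to dispatch this case directly: it Bell-diagonalizes $\rho_{BB'}=\tr_A\proj{\psi}$, uses local unitaries to permute the Bell basis so that the two eigenvectors are $\ket{\Phi^\pm}$ (avoiding the antisymmetric $\ket{\Psi^-}$), and reads off a unitary on $B'$ alone that symmetrizes the purification. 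You should either adopt that argument or actually prove your density lemma; as written, this case is unfinished.
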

\begin{proof}
The condition is necessary for any dimension and this is dealt with in section \ref{sec:condition}. Here we only prove sufficiency for two qubits. By lemma \ref{Bsymmetric}, the condition implies that there exists a pure state $\ket{\psi}_{ABB^\prime}$ such that $\rho_B = \rho_{B^\prime}$.
We will prove that for such a pure state, there is always a unitary operator on the $B^\prime$ system alone that will make it symmetric between $B$ and $B^\prime$. 

First, we prove the special case when $\rho_B$ is completely mixed. Then $\rho_{BB^\prime} = \tr_A \proj{\psi}_{ABB^\prime}$ is a state with maximally mixed subsystem. For such a state, there exist local unitaries $U_B, V_{B^\prime}$ such that $(U_B \otimes V_{B^\prime})\rho_{BB^\prime}(U_B \otimes V_{B^\prime})^\dagger$ is Bell-diagonal \cite{horodecki96c}. Moreover, since $A$ is a qubit, $\rho_{BB^\prime}$ is of rank-2 and we have
\begin{equation}
\label{eq:belldiagform}
(U_B \otimes V_{B^\prime})\rho_{BB^\prime}(U_B \otimes V_{B^\prime})^\dagger
= 
p \proj{\psi_1} + (1-p) \proj{\psi_2}
\end{equation}
with $\ket{\psi_1}$ and $\ket{\psi_2}$ two of the four Bell-diagonal states $\ket{\Phi^\pm} = (\ket{00} \pm \ket{11})/\sqrt{2}$, $\ket{\Psi^\pm} = (\ket{01} \pm \ket{10})/\sqrt{2}$. Since the Bell-basis can be permuted arbitrarily with local unitaries \cite{bennett96b}, we can choose $U_B$ and $V_{B^\prime}$ such that $\ket{\psi_1} = \ket{\Phi^+}$ and $\ket{\psi_2} = \ket{\Phi^-}$, so that we avoid the antisymmetric state $\ket{\Psi^-}$. The state in \eqref{eq:belldiagform} can now be purified to $\sqrt{p} \ket{0}_A \ket{\Phi^+}_{BB^\prime} + \sqrt{1-p} \ket{1}_A \ket{\Phi^-}_{BB^\prime}$. Since all purifications of a state are equivalent up to a local unitary on the purifying system --- in this case $A$ --- this is related to the pure state that we started out with as
\begin{multline}
(T_A \otimes U_B \otimes V_{B^\prime}) \ket{\psi}_{ABB^\prime}
\\
=
\sqrt{p} \ket{0}_A \otimes \ket{\Phi^+}_{BB^\prime} + \sqrt{1-p} \ket{1}_A \otimes \ket{\Phi^-}_{BB^\prime}
\end{multline}
where $T_A$ is the unitary operator on $A$ that relates this purification to the one where $A$ is left unchanged. We now perform the unitary $T_A^\dagger \otimes U_B^\dagger \otimes U_{B^\prime}^\dagger$ on the state, and a unitary of this form will not change the symmetry between $B$ and $B^\prime$. This gives
\begin{multline}
(I_A \otimes I_B \otimes U_B^\dagger V_{B^\prime}) \ket{\psi}_{ABB^\prime}
=
(T_A^\dagger \otimes U_B^\dagger \otimes U_{B^\prime}^\dagger) \times \\
(\sqrt{p} \ket{0}_A \otimes \ket{\Phi^+}_{BB^\prime} + \sqrt{1-p} \ket{1}_A \otimes \ket{\Phi^-}_{BB^\prime}).
\end{multline}
From this we can conclude that performing the unitary $U^\dagger V$ on system $B^\prime$ will take the starting state $\ket{\psi}_{ABB^\prime}$ to a symmetric one, so the state $\rho_{AB}$ has a symmetric extension.

We now consider the generic case when the reduced state $\rho_B$ is \emph{not} maximally mixed. 
In this case, the two non-degenerate eigenvectors of $\rho_B$ provide a preferred basis for $B$ and the corresponding basis in $B'$ is an eigenbasis for $\rho_{B'}$.
By choosing the bases in this way, we make sure that $\rho_{B} = \rho_{B'}$ are diagonal.

An arbitrary state vector of the system $ABB^\prime$ can written as $a \ket{000} + b\ket{001} + c\ket{010} + d\ket{011} + e\ket{100} + f\ket{101} + g\ket{110} + h\ket{111}$, where $a,\ldots,h$ are complex numbers whose absolute square sum to 1. 
It is symmetric under permutation of $B$ and $B^\prime$ iff $b = c$ and $f = g$. 
In appendix \ref{app:sufcalculation} we show that imposing that the reduced states $\rho_B$ and $\rho_{B'}$ are equal, diagonal and are not maximally mixed, 
implies that the amplitudes satisfy 
\begin{equation}
\label{eq:absrelationmain}
  |b| = |c|, \quad |f| = |g|
\end{equation}
and
\begin{equation}
\label{eq:phaserelationmain}
  |c||g| \left(\e^{\im(\phi_b - \phi_c)} - \e^{\im(\phi_f - \phi_g)} \right) = 0.
\end{equation}
where $b = |b| \e^{\im \phi_b}$ and similarly for $c$, $f$ and $g$.
So while the absoulute values of the relevant amplitudes are equal, the complex phases might be off. 
This can be corrected with a phase gate on $B'$ as follows. 
If $b = c = 0$, the unitary operator on $B'$ is 
\begin{equation}
  U_{B^\prime} = \proj{0} + \e^{-\im(\phi_f - \phi_g)} \proj{1}
\end{equation}
and if $f = g = 0$ it is
\begin{equation}
  U_{B^\prime} = \proj{0} + \e^{-\im(\phi_b - \phi_c)} \proj{1}.
\end{equation}
If none of the relavant amplitudes are zero, \eqref{eq:phaserelationmain} implies that the two expressions are equal, so the same unitary operator will correct both amplitude relations.

Hence, for two-qubit states $\rho_{AB}$ that satisfy the spectrum condition \eqref{eq:spectrumcondition}, we have shown that there exists a pure state vector $\ket{\psi}_{ABB^\prime}$ which is symmetric, $\ket{\psi}_{ABB^\prime}$ = $P_{BB^\prime} \ket{\psi}_{ABB^\prime}$.
\end{proof}

This theorem, together with corollary \ref{cor:pure-extendible-decomp}, fully characterizes the set of 2-qubit states with symmetric extension. 
It is the convex hull of the set of states that satisfies condition \eqref{eq:spectrumcondition}. 
Not all the states that satisfy \eqref{eq:spectrumcondition} are extremal, however. 
While any pure-extendbile states that is itself pure (i.e.~a product state) is extremal for both the set of states and the subset of extendible states, 
there are some mixed pure-extendible states that are not extremal.
The following proposition characterizes the mixed non-extremal pure-extendible states of 2 qubits.

\begin{proposition}
\label{prop:extremal}
For a 2-qubit mixed pure-extendible state $\rho_{AB}$ the following are equivalent: \\
1. $\rho_{AB}$ can be written as a convex combination of other pure-extendible states \\
2. $\rho_{AB}$ is separable \\
3. $\rho_{AB}$ is of the form 
\begin{equation}
\label{eq:purextsepform}
\rho_{AB} = \lambda \proj{\psi_0 0} + (1-\lambda) \proj{\psi_1 1},
\end{equation}
where $\braket{0}{1} = 0$, $\braket{\psi_0}{\psi_1}$ is arbitrary and $0 < \lambda < 1$.
\end{proposition}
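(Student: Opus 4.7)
The plan is to prove the cyclic chain $(3)\Rightarrow(2)\Rightarrow(1)\Rightarrow(3)$. Because $\rho_{AB}$ is mixed pure-extendible, Theorem~\ref{thm:spectrumcondition} forces $\rank(\rho_{AB})=\rank(\rho_B)=2$ (the common rank is bounded by $\dim\mathcal{H}_B=2$ and mixedness rules out $1$), so the spectrum is $(\lambda,1-\lambda)$ with $0<\lambda\le 1/2$. The implication $(3)\Rightarrow(2)$ is immediate since \eqref{eq:purextsepform} manifestly displays $\rho_{AB}$ as a sum of product states. For $(2)\Rightarrow(1)$, a mixed separable state is a non-trivial convex combination of distinct pure product states, each of which admits the trivial pure symmetric extension $\ket{a}_A\otimes\ket{b}_B\otimes\ket{b}_{B^\prime}$ and is therefore pure-extendible.

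For $(1)\Rightarrow(3)$ I would split on whether $\rho_B$ is maximally mixed. When $\lambda=1/2$, the maximally-mixed sub-case of Theorem~\ref{thm:necsufqubit}'s proof shows (up to local unitaries on $A$ and $B$) that $\rho_{AB}$ arises from the symmetric extension $\sqrt{p}\ket{0}_A\ket{\Phi^+}_{BB^\prime}+\sqrt{1-p}\ket{1}_A\ket{\Phi^-}_{BB^\prime}$; tracing out $B^\prime$ yields a state whose two spectral eigenvectors factor as $(\sqrt{p}\ket{0}\pm\sqrt{1-p}\ket{1})_A\otimes\ket{0}_B$ and $(\sqrt{p}\ket{0}\mp\sqrt{1-p}\ket{1})_A\otimes\ket{1}_B$, already in form~\eqref{eq:purextsepform}. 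Hence (3) holds unconditionally in this sub-case.

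When $\lambda<1/2$, the eigenbasis $\{\ket{0},\ket{1}\}$ of $\rho_B$ is unique. Let $\rho_{AB}=q\sigma_1+(1-q)\sigma_2$ with pure-extendible $\sigma_i\ne\rho_{AB}$; each $\sigma_i$ has rank $1$ or $2$, supported in $\support(\rho_{AB})$. A rank-$1$ $\sigma_i$ is necessarily a pure product state, since only product states are pure-extendible among pure states. If both $\sigma_i$ are rank $1$, the decomposition exhibits $\rho_{AB}$ as a 2-product-state mixture, and the Gram-matrix analysis below (for $(2)\Rightarrow(3)$) places it in form~\eqref{eq:purextsepform}. The main obstacle is the remaining sub-case where some $\sigma_i$ has rank $2$ with $\support(\sigma_i)=\support(\rho_{AB})$; here I would invoke the parametrization of pure symmetric extensions in the $\rho_B$-eigenbasis from the appendix to Theorem~\ref{thm:necsufqubit}, whose amplitude constraints \eqref{eq:absrelationmain}--\eqref{eq:phaserelationmain} rigidly restrict the family of pure-extendible rank-$2$ states sharing a fixed $2$-dimensional support, and show that a non-trivial line segment of such states can exist in $\support(\rho_{AB})$ only if that support already contains two orthogonal product vectors $\ket{\psi_0}\ket{0}$ and $\ket{\psi_1}\ket{1}$, forcing \eqref{eq:purextsepform}.

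The auxiliary implication $(2)\Rightarrow(3)$ invoked above goes as follows. A rank-$2$ separable 2-qubit state admits a $2$-product decomposition $\rho_{AB}=p\proj{a_1 b_1}+(1-p)\proj{a_2 b_2}$; reading off the spectrum of $\rho_{AB}$ from the $2\times 2$ Gram matrix of $\sqrt{p}\ket{a_1 b_1}$ and $\sqrt{1-p}\ket{a_2 b_2}$, and that of $\rho_B=p\proj{b_1}+(1-p)\proj{b_2}$ analogously, the condition $\spec{\rho_{AB}}=\spec{\rho_B}$ simplifies to $|y|^2(|x|^2-1)=0$ with $x=\braket{a_1}{a_2}$ and $y=\braket{b_1}{b_2}$. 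If $y=0$, then $\{\ket{b_1},\ket{b_2}\}$ is orthonormal and $\rho_{AB}$ is already in form~\eqref{eq:purextsepform}; if $|x|=1$, then $\rho_{AB}=\proj{a_1}\otimes\tau_B$ with $\tau_B$ of rank $2$, and diagonalizing $\tau_B$ yields form~\eqref{eq:purextsepform} with $\ket{\psi_0}=\ket{\psi_1}=\ket{a_1}$.
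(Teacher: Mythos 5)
Your implications $3\Rightarrow 2$ and $2\Rightarrow 1$ match the paper, and your auxiliary Gram-matrix argument for $2\Rightarrow 3$ is correct: with both traces equal to one, equality of the spectra reduces to equality of the determinants of the two $2\times 2$ Gram matrices, which is $|y|^2(1-|x|^2)=0$. The maximally mixed sub-case of $1\Rightarrow 3$, extracted from the proof of Theorem~\ref{thm:necsufqubit}, is also sound: tracing $B'$ out of $\sqrt{p}\ket{0}\ket{\Phi^+}+\sqrt{1-p}\ket{1}\ket{\Phi^-}$ gives $\frac{1}{2}\proj{u_+ 0}+\frac{1}{2}\proj{u_- 1}$ with $\ket{u_\pm}=\sqrt{p}\ket{0}\pm\sqrt{1-p}\ket{1}$, and local unitaries preserve the form \eqref{eq:purextsepform}.

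The problem is that the one sub-case carrying all the difficulty --- $\rho_B$ not maximally mixed and at least one component of the decomposition of rank 2 --- is not proved but only announced (``I would invoke \ldots and show that \ldots''). The claim that the constraints \eqref{eq:absrelationmain}--\eqref{eq:phaserelationmain} force any nontrivial segment of pure-extendible states inside $\support(\rho_{AB})$ to lie on a span of orthogonal product vectors $\ket{\psi_0 0},\ket{\psi_1 1}$ is essentially the content of the proposition, and you give no argument for it; note also that those appendix constraints are written in the eigenbasis of each component's own reduced state, which need not coincide across components, so a common $2$-dimensional support does not immediately put all components into one normal form. Two further loose ends: (i) condition 1 allows decompositions with arbitrarily many terms, and your reduction to $\rho_{AB}=q\sigma_1+(1-q)\sigma_2$ is not justified, since a partial sum of pure-extendible states need not be pure-extendible; (ii) even granted $\support(\rho_{AB})=\mathspan\{\ket{\psi_0 0},\ket{\psi_1 1}\}$, form \eqref{eq:purextsepform} still requires killing the coherence $\rho_{01}\ketbra{\psi_0 0}{\psi_1 1}$, which needs the purity computation $\tr(\rho_{AB}^2)=\tr(\rho_B^2)\Rightarrow|\rho_{01}|^2\bigl(1-|\braket{\psi_0}{\psi_1}|^2\bigr)=0$ --- you carry out this computation in your $2\Rightarrow 3$ step but do not apply it here. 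For comparison, the paper handles the general multi-term decomposition at once: contractivity of the trace norm under $\tr_A$ gives $\vec{R}_j\cdot\vec{R}_k\le\vec{r}_j\cdot\vec{r}_k$ in the Bloch parametrization, the spectrum condition on $\rho_{AB}$ and on each component forces these inequalities to be equalities, hence $\|\rho_B^j-\rho_B^k\|_1=\|\rho_{AB}^j-\rho_{AB}^k\|_1$, and the resulting maximal trace distance on $B$ is what produces the orthogonal product eigenvectors of $\rho_{AB}^j-\rho_{AB}^k$. Some version of that argument, or a completed version of your rigidity claim, is needed before this sub-case can be considered proved.
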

\begin{proof}
3 $\Rightarrow$ 2 is trivial as \eqref{eq:purextsepform} is a convex combination of two product states. 2 $\Rightarrow$ 1 is also trivial, since any mixed separable state can be decomposed into a convex combination of pure product states $\rho_{AB} = \sum_j p_j \proj{\psi_j \phi_j}$ and the product states have the pure symmetric extension $\ket{\psi_j}_A\ket{\phi_j}_B\ket{\phi_j}_{B^\prime}$.

The only nontrivial part is 1 $\Rightarrow$ 3. For this part assume that $\rho_{AB}$ can be written as a convex combination of other pure-extendible states,
\begin{equation}
  \label{eq:pureABdecomp}
  \rho_{AB} = \sum_j p_j \rho_{AB}^j
\end{equation}
where $\rho_{AB}$ and all $\rho_{AB}^j$ satisfy the spectrum condition \eqref{eq:spectrumcondition}. Tracing out $A$ gives 
\begin{equation}
  \label{eq:pureBdecomp}
  \rho_{B} = \sum_j p_j \rho_{B}^j.
\end{equation}
Since $\rho_{AB}$ has support on a 2-dimensional subspace, the support of the $\rho_{AB}^j$ must be on that same subspace. We can parametrize the states on $AB$ by Pauli operators $I_S, \Sigma_x, \Sigma_y, \Sigma_z$ on this 2-dimensional subspace, 
\begin{equation}
  \label{eq:paulidecompAB}
  \rho_{AB}^j = \frac{1}{2} (I_S + X_j \Sigma_x + Y_j \Sigma_y + Z_j \Sigma_z) = \frac{1}{2} ( I + \vec{R} \cdot \vec{\Sigma}).
\end{equation}
Note that the $I_S$ here is not the identity on the 4-dimensional Hilbert space of the system $AB$, but a projector to the 2-dimensional support of $\rho_{AB}$. The reduced states on system $B$ can be written as
\begin{equation}
   \label{eq:paulidecompB}
  \rho_{B}^j = \frac{1}{2} (I_B + x_j \sigma_x + y_j \sigma_y + z_j \sigma_z) = \frac{1}{2} ( I_B + \vec{r} \cdot \vec{\sigma}),
\end{equation}
where $\sigma_x, \sigma_y, \sigma_z$ are the Pauli operators on the qubit $B$. 
Similarly, we can write $\rho_{AB}^j = \vec{R}_j \cdot \vec{\Sigma}$ and $\rho_{B}^j = \vec{r}_j \cdot \vec{\sigma}$. 
In this representation, \eqref{eq:pureABdecomp} and \eqref{eq:pureBdecomp} becomes $\vec{R} = \sum_j p_j \vec{R}_j$ and $\vec{r} = \sum_j p_j \vec{r}_j$. 

The eigenvalues of $\rho_{AB}$ and $\rho_B$ are determined by the length of the vectors
\begin{align}
 \label{eq:r_ev_AB}
  \spec{\rho_{AB}} &= \frac{1}{2}(1 + |\vec{R}|,1 - |\vec{R}|) \\
 \label{eq:r_ev_B}
  \spec{\rho_B}    &= \frac{1}{2}(1 + |\vec{r}|,1 - |\vec{r}|)
\end{align}
The $\rho_{AB}^j$ and $\rho_{AB}$ are pure-extendible, so they satisfy \eqref{eq:spectrumcondition}. In terms of the above parametrization, this means that $|\vec{R}_j| = |\vec{r}_j|$ and $|\vec{R}| = |\vec{r}|$. 

Since tracing out a part of a quantum system never can increase the trace distance between the states \cite{ruskai94a}, we have 
\begin{equation}
\| \rho_{AB}^j - \rho_{AB}^k\|_1 \geq \| \rho_{B}^j - \rho_{B}^k\|_1
\end{equation}
The trace distance can be written in terms of $\vec{R}_j$ and $\vec{r}_j$ as $\| \rho_{AB}^j - \rho_{AB}^k\|_1 = |\vec{R}_j - \vec{R}_k|$ and  $\| \rho_{B}^j - \rho_{B}^k\|_1 = |\vec{r}_j - \vec{r}_k|$. 
From $|\vec{R}_j - \vec{R}_k|^2 \geq |\vec{r}_j - \vec{r}_k|^2$ we get
\begin{equation*}
   |\vec{R}_j|^2 - 2\vec{R}_j \cdot \vec{R}_k + |\vec{R}_k|^2 \geq |\vec{r}_j|^2 - 2\vec{r}_j \cdot \vec{r}_k + |\vec{r}_k|^2
\end{equation*}
and since $|\vec{R}_j| = |\vec{r}_j|$, this gives
\begin{equation}
  \label{eq:ip_ineq_sepproof}
  \vec{R}_j \cdot \vec{R}_k \leq \vec{r}_j \cdot \vec{r}_k.
\end{equation}

Now we can use $|\vec{R}| = |\vec{r}|$ and \eqref{eq:ip_ineq_sepproof} to show that when $\rho_{AB}$ is a pure-extendible state, the trace distance between the $\rho_{AB}^j$ does not decrease when system $A$ is traced out.
\begin{align}
 |\vec{R}|^2 &= \biggl( \sum_j p_j \vec{R}_j \biggr) \biggl( \sum_k p_k \vec{R}_k \biggr) \\
             &= \sum_j p_j^2 |\vec{R}_j|^2 + 2 \sum_{j < k} p_j p_k \vec{R}_j \cdot \vec{R}_k \\
 |\vec{r}|^2 &= \sum_j p_j^2 |\vec{r}_j|^2 + 2 \sum_{j < k} p_j p_k \vec{r}_j \cdot \vec{r}_k 
\end{align}
so by demanding $|\vec{R}| = |\vec{r}|$ and using $|\vec{R}_j| = |\vec{r}_j|$ we get
\begin{equation}
 \sum_{j < k} p_j p_k \vec{R}_j \cdot \vec{R}_k = \sum_{j < k} p_j p_k \vec{r}_j \cdot \vec{r}_k
\end{equation}
By \eqref{eq:ip_ineq_sepproof} none of the terms on the LHS can be greater than the corresponding term on the RHS. The only way for this to be satisfied is that 
\begin{equation}
 \vec{R}_j \cdot \vec{R}_k = \vec{r}_j \cdot \vec{r}_k
\end{equation}
for all pairs $(j,k)$.
By reversing the calculation leading to \eqref{eq:ip_ineq_sepproof} we get that $|\vec{R}_j - \vec{R}_k|^2 = |\vec{r}_j - \vec{r}_k|^2$ and 
\begin{equation}
\label{eq:equal_trnorm}
\| \rho_{AB}^j - \rho_{AB}^k\|_1 = \| \rho_{B}^j - \rho_{B}^k\|_1.
\end{equation}

The next step is to use \eqref{eq:equal_trnorm} to find the structure of the support of $\rho_{AB}$.
The difference $\rho_{AB}^j - \rho_{AB}^k$ must be on the same two-dimensional subspace that all the $\rho_{AB}^j$ are confined to. Being the difference between two operators with trace one, it is also traceless, so in the spectral decomposition it can be written as
\begin{equation}
\label{eq:orth_decomp}
 \rho_{AB}^j - \rho_{AB}^k = r \proj{\psi_+} - r \proj{\psi_-}
\end{equation}
for some $r \geq 0$. 
The orthogonal vectors $\ket{\psi_+}$ and $\ket{\psi_-}$ define the two-dimensional support of $\rho_{AB}^j$ and $\rho_{AB}$. 
From \eqref{eq:equal_trnorm} and taking the trace norm of both sides of \eqref{eq:orth_decomp} it is clear that $\|  \rho_{B}^j - \rho_{B}^k \|_1 = 2r$. 

Let $\rho_B^+ = \tr_A \proj{\psi_+}$ and $\rho_B^- = \tr_A \proj{\psi_-}$. 
Tracing out the $A$ system in \eqref{eq:orth_decomp} and taking the trace norm gives 
$r \| \rho_B^+ - \rho_B^- \|_1 = \|  \rho_{B}^j - \rho_{B}^k \|_1 = 2r$, or 
\begin{equation}
 \| \rho_B^+ - \rho_B^- \|_1 = 2.
\end{equation}
This is the maximal distance between two states in trace norm, and it means that $\rho_B^+$ and $\rho_B^-$ have support on orthogonal subspaces. Since $B$ is a qubit, $\rho_B^+$ and $\rho_B^-$ must be orthogonal pure states which we denote $\rho_B^+ = \proj{0}$, $\rho_B^- = \proj{1}$. This also means that 
$\ket{\psi_+}$ and $\ket{\psi_-}$ are product states,
\begin{align}
 \ket{\psi_+} &= \ket{\psi_0} \otimes \ket{0} \\
 \ket{\psi_-} &= \ket{\psi_1} \otimes \ket{1},
\end{align}
where $\ket{\psi_0}$ and $\ket{\psi_1}$ are arbitrary.

Any state on the subspace spanned by $\ket{\psi_+}$ and $\ket{\psi_-}$ can be expressed as 
\begin{equation}
 \rho_{AB} = \sum_{m,n = 0}^1 \rho_{mn} \ket{\psi_m}\bra{\psi_n} \otimes \ket{m}\bra{n}
\end{equation}
with the reduced state being
\begin{equation}
 \rho_{B} = \sum_{m,n = 0}^1 \rho_{mn} \braket{\psi_n}{\psi_m} \otimes \ket{m}\bra{n}.
\end{equation}
Since $\rho_{AB}$ is pure-extendible, it satisfies \eqref{eq:spectrumcondition} and for qubits this is equivalent to the condition that the purities of the global and reduced states are equal, $\tr(\rho_{AB}^2) = \tr(\rho_{B}^2)$. The purities are
\begin{align}
 \tr(\rho_{AB}^2) &= \rho_{00}^2 + \rho_{11}^2 + |\rho_{01}|^2 \\
 \tr(\rho_{B}^2) &= \rho_{00}^2 + \rho_{11}^2 + |\rho_{01}|^2 |\braket{\psi_0}{\psi_1}|^2.
\end{align}
For the purities to be equal, either $\rho_{01}=0$ or $|\braket{\psi_0}{\psi_1}| = 1$. In the first case, the state would be 
\begin{equation}
 \label{eq:sepform2}
 \rho_{AB} = \rho_{00} \proj{\psi_0 0} + \rho_{11} \proj{\psi_1 1}
\end{equation}
which is the sought separable form. In the other case $\ket{\psi_0}$ and $\ket{\psi_1}$ only differs by a phase, so all states in the subspace are product states of the form $\proj{\psi_0} \otimes \rho_B$ which is the special case of \eqref{eq:sepform2} where $\ket{\psi_0} = \ket{\psi_1}$.
\end{proof}


\section{Counterexamples for systems with higher dimension}
\label{sec:counterex}

In the previous section we have seen that the spectrum condition \eqref{eq:spectrumcondition} is not only necessary but also sufficient for the state to have a pure symmetric extension when the system considered is a pair of qubits. One might ask if the same might be true for any higher dimensional system. We show some counterexamples that exclude this possibility for any dimension greater than $2 \times 2$.

\begin{example}
\label{ex:4x2}
\begin{figure}
 \resizebox{\columnwidth}{!}{\includegraphics{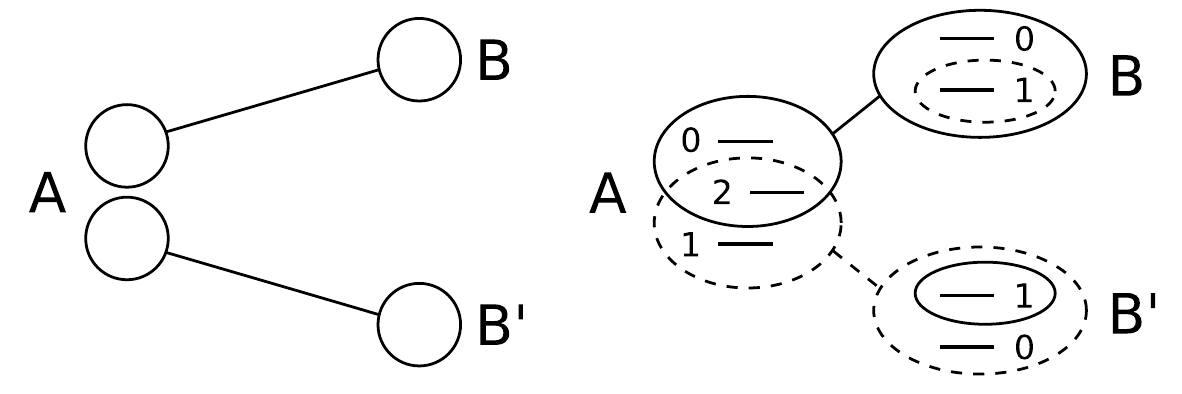}}
  \caption{\label{fig:counterexNx2} Examples of tripartite states states where $\rho_{AB}$ satisfies the spectrum condition \eqref{eq:spectrumcondition}, but does not have a symmetric extension. The $4 \times 2$ state from example \ref{ex:4x2} on the left and the $3\times 2$ state from example \ref{ex:3x2} on the right.}
\end{figure}
($4 \times 2$). The simplest example is when Alice holds two qubits and Bob one. One of Alice's qubits is maximally mixed, while the other is maximally entangled with Bob's qubit.
\begin{equation}
  \rho_{A_1 A_2 B} = \frac{I_{A_1}}{2} \otimes \proj{\Phi^+}_{A_2 B}
\end{equation}
The global density matrix $\rho_{A_1 A_2 B}$ has non-zero eigenvalues $\{1/2,1/2\}$, and so has the local one $\rho_{B}$. The state therefore satisfies the spectrum condition, but does not have a symmetric extension, since by tracing out $A_1$, Alice can make a pure maximally entangled state.
The purification of the state is illustrated in figure \ref{fig:counterexNx2}.
\end{example}

While the above example is conceptually simple, it does not exclude that the spectrum condition could be sufficient when Alice holds a qutrit. The following example is similar in spirit to the above, and shows that for system of size $3 \times 2$ and higher, the spectrum condition cannot be sufficient. 
\begin{example}
\label{ex:3x2}
($3 \times 2$). Consider the (unnormalized) vectors of a tripartite system 
\begin{align}
  \ket{v_1} &= \ket{001} + \ket{211} \\
  \ket{v_2} &= \ket{110} + \ket{211}, 
\end{align}
where the registers are $A$, $B$ and $B^\prime$. The vectors are illustrated in figure \ref{fig:counterexNx2}, the solid line corresponds to $\ket{v_1}$ and the dashed line to $\ket{v_2}$. The vector $\ket{v_1}$ is entangled between $A$ and $B$, while $\ket{v_2}$ is entangled between $A$ and $B^\prime$. Interchanging 0 and 1 at $A$ and swapping $B$ and $B^\prime$, takes $\ket{v_1}$ to $\ket{v_2}$ and \emph{vice versa}. 
Adding the two vectors and normalizing gives the state
\begin{equation}
  \ket{\psi} = \frac{1}{\sqrt{6}} \ket{001} + \frac{1}{\sqrt{6}} \ket{110} + \sqrt{\frac{2}{3}} \ket{211}
\end{equation}
The reduced states are 
\begin{align}
 \rho_{AB} &= \frac{5}{6} \proj{\psi_{1/5}} + \frac{1}{6} \proj{11} \\
 \rho_B &= \frac{5}{6} \proj{1} + \frac{1}{6} \proj{0},
\end{align}
where 
\begin{equation}
  \ket{\psi_{1/5}} = \frac{1}{\sqrt{5}} \ket{00} + \sqrt{\frac{4}{5}} \ket{21}.
\end{equation}
The non-zero eigenvalues are the same for $\rho_{AB}$ and $\rho_{B}$, so $\rho_{AB}$ satisfies the spectrum condition. However, it does not have a symmetric extension. This is most easily seen by applying the filter $F = \proj{0} + \proj{2}$ to $A$. This succeeds with probability 5/6 and the state after the filter is the pure entangled state $\ket{\psi_{1/5}}$, which has no symmetric extension.
\end{example}

Both examples above are states that can be extended to states that are invariant under some $U_A \otimes \swap$, where $U_A$ is a unitary on $A$, but not under $I_A \otimes \swap$. For the $4 \times 2$ case, $U_A$ was the unitary swapping $A_1$ and $A_2$, while in the $3 \times 2$ example it was ${\ket{0}\bra{1}} + {\ket{1}\bra{0}} + {\ket{2}\bra{2}}$. One can use the same arguments as in the proof of theorem \ref{thm:spectrumcondition} to show that any pure state that has a symmetry of the type $U_A \otimes \swap$ has a reduction to $AB$ that satisfies the condition \eqref{eq:spectrumcondition}.

The above examples show that the condition \eqref{eq:spectrumcondition} cannot be sufficient for pure extendibility for $M \times N$ systems where $M \ge 3$ and $N \ge 2$. This leaves open the question whether it is sufficient for $2\times N$ for any $N > 2$. We therefore now give an example of a class of states with system dimension $2 \times 3$ that satisfies condition \eqref{eq:spectrumcondition}, but has no symmetric extension. 

\begin{example}
($2 \times 3$). Consider states with spectral decomposition
\begin{equation}
  \rho_{AB} = \sum_{j=0}^2 \lambda_j \proj{\psi_j}
\end{equation}
where the eigenvectors are $\ket{\psi_0} = \ket{12}$, $\ket{\psi_1} = \ket{02}$ and $\ket{\psi_2} = \sqrt{s}\ket{00} + \sqrt{1-s}\ket{11}$. For such a state to satisfy the spectrum condition \eqref{eq:spectrumcondition}, the eigenvalues must be $\lambda_0 = s/2$, $\lambda_1 = (1-s)/2$ and $\lambda_2 = 1/2$. To $\rho_{AB}$ we now apply a filter operation in the standard basis in the $A$ system, $F = \sqrt{p} \proj{0} + \proj{1}$. This is a 1-SLOCC operation and cannot break a symmetric extension. After a successful filter the global and local eigenvalues $\lambda_j(p)$ and $\lambda_j^B(p)$ are 
\begin{equation*}
  \begin{aligned}
    \lambda_0(p) &= \frac{s}{1+p}\\
    \lambda_1(p) &= \frac{(1-s)p}{1+p}\\
    \lambda_2(p) &= \frac{1-s(1-p)}{1+p}
  \end{aligned}
\qquad
  \begin{aligned}
    \lambda_0^B(p) &= \frac{sp}{1+p}\\
    \lambda_1^B(p) &= \frac{1-s}{1+p}\\
    \lambda_2^B(p) &= \frac{1-(1-s)(1-p)}{1+p}.
  \end{aligned}
\end{equation*}
Except when $s \in \{0,1/2,1\}$ or $p \in \{0,1\}$, the spectra of the local and global density matrices are different. Since a filtering like this will keep a pure symmetric extension if the original state had one, $\rho_{AB}$ cannot have a pure symmetric extension. For $1/2 < s < 1$ and $0 < p < 1$ the state has no symmetric extension at all. This is because in this regime the coherent information $I(A\rangle B) := S(\rho_{B}) - S(\rho_{AB})$, where $S(\cdot)$ is the von Neumann entropy, is positive. And this is a lower bound to the distillable entanglement with one-way communication from $A$ to $B$ \cite{devetak05a}. By monogamy of entanglement, $\rho_{AB}$ cannot have a symmetric extension.
\end{example}


\section{Symmetric extension of 2-qubit states}
\label{sec:2qubits}

In previous sections we have characterized the \emph{extremal} symmetric extendible two-qubit states as those states that satisfy \eqref{eq:spectrumcondition} but not \eqref{eq:purextsepform}.
We would also like to extend this to a characterization of \emph{all} states with symmetric extension. 
In other words we want necessary and sufficient conditions for the ability to write a state as a convex combination of states that satisfy 
$\spec{\rho_{AB}^i} = \spec{\rho_{B}^i}$.
This is similar to the separability question, 
where the extremal states are pure product states, 
which are characterized by the more restrictive condition 
$\spec{\rho_{AB}^i} = \spec{\rho_{B}^i} = (1,0,\ldots)$.
Many years of entanglement theory have taught us that even though product states
are easy to recognize, the separable states are not, 
except in special cases (two qubits is one of them).
For one thing, even though the pure product states can be characterized through its local 
and global spectrum, 
we need to know more about the structure to decide if a state is separable --- 
even for two qubits \cite{nielsen01a}.
Nevertheless, we conjecture that two-qubit symmetric extendible states can be characterized solely by the local and global eigenvalues. 
We present a conjectured necessary and sufficient condition which is supported by numerical evidence and we can prove in some special cases.

\begin{conjecture}
\label{conj:twoqubitextension}
A two qubit state $\rho_{AB}$ with reduced state $\rho_{B}$ has a symmetric extension if and only if
\begin{equation}
\label{eq:twoqubitextension}
  \tr(\rho_B^2) \geq \tr(\rho_{AB}^2) - 4\sqrt{\det(\rho_{AB})}
\end{equation}
\end{conjecture}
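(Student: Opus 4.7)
By Corollary~\ref{cor:pure-extendible-decomp} and Theorem~\ref{thm:necsufqubit}, a two-qubit state $\rho_{AB}$ admits a symmetric extension if and only if it can be written as a convex combination $\rho_{AB}=\sum_i p_i\,\rho^i_{AB}$ of states of rank at most $2$ that satisfy $\tr((\rho^i_{AB})^2)=\tr((\rho^i_B)^2)$; in particular every extreme piece has $\det(\rho^i_{AB})=0$. The conjecture therefore amounts to proving that the convex hull of these extreme states is exactly the semi-algebraic region cut out by the single inequality \eqref{eq:twoqubitextension}.

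For the necessity direction, the plan is to expand $\tr(\rho_{AB}^2)$, $\tr(\rho_B^2)$ and $\det(\rho_{AB})$ as polynomials in the Pauli coefficients $(\vec a,\vec b,T)$ of $\rho_{AB}=\tfrac14(I+\vec a\cdot\vec\sigma\otimes I+I\otimes\vec b\cdot\vec\sigma+\sum_{ij}T_{ij}\,\sigma_i\otimes\sigma_j)$. On each extreme piece the inequality holds with equality, since $\det(\rho^i_{AB})=0$ and $\tr((\rho^i_{AB})^2)=\tr((\rho^i_B)^2)$. To lift this to mixtures I would combine convexity of the purity $\tr(\cdot^2)$ with concavity of $(\det)^{1/4}$ on the PSD cone, so that the cross-terms produced by mixing pairs of rank-$2$ supports yield enough $\sqrt{\det\rho_{AB}}$ to absorb the excess $\sum_i p_i\tr((\rho^i_{AB})^2)-\tr(\rho_{AB}^2)$ created by mixing.

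For the sufficiency direction the approach is constructive. Invoking Lemma~1 (invertible local filters preserve symmetric extendibility) and the local-unitary covariance of \eqref{eq:twoqubitextension}, I would first bring $\rho_{AB}$ into a canonical form in which $\rho_A$ and $\rho_B$ are simultaneously diagonal and $T$ is reduced accordingly. Within this normal form I would look for a decomposition $\rho_{AB}=\sum_i p_i \rho^i_{AB}$ with only a few rank-$2$ spectrum-condition summands, solving for the supports and weights against the Bloch parameters of $\rho_{AB}$; the feasibility condition for the resulting algebraic system should be precisely \eqref{eq:twoqubitextension}. Natural special cases to settle first are (i) rank-$2$ states, on which $\det(\rho_{AB})=0$ and the conjecture collapses to Theorem~\ref{thm:necsufqubit} possibly combined with some mixing by pure product states; (ii) Bell-diagonal states, where $\rho_B=I/2$ and the inequality becomes an explicit condition on the Bell weights; (iii) states with either $\rho_A$ or $\rho_B$ pure, which are automatically separable and hence symmetric extendible.

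The main obstacle will be sufficiency for generic full-rank states: the boundary of the conjectured region is a quartic hypersurface and no fixed finite-term ansatz is obviously guaranteed to span the entire feasible set. A realistic fallback is to establish an explicit low-rank construction on or near the boundary and then extend into the interior by convexity together with a Carath\'eodory argument, or to dualise the existence problem via semidefinite programming in the spirit of Doherty--Parrilo--Spedalieri and match the dual witness against the quartic $\tr(\rho_{AB}^2)-\tr(\rho_B^2)-4\sqrt{\det\rho_{AB}}$.
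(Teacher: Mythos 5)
The statement you are addressing is stated in the paper as a \emph{conjecture}: the paper itself does not prove it in general, but only in three special cases (rank-2 states via Theorem~\ref{thm:rank2}, Bell-diagonal states by matching \eqref{eq:twoqubitextension} against an independently derived semidefinite-programming characterization, and the $y=0$ Z-correlated class via an explicit extension ansatz and subdeterminant conditions). Your write-up is a research programme rather than a proof, and the two steps you leave as ``plans'' are precisely the open parts. Your framing --- reduce via Corollary~\ref{cor:pure-extendible-decomp} and Theorem~\ref{thm:necsufqubit} to showing that the convex hull of the rank-$\le 2$ states with $\tr((\rho^i_{AB})^2)=\tr((\rho^i_B)^2)$ equals the region \eqref{eq:twoqubitextension} --- is correct and matches the paper's own framing in its discussion section, but neither direction is actually established by what you write.

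Concretely, the necessity step as sketched does not go through. Writing $\tr(\rho_{AB}^2)-\tr(\rho_B^2)=\sum_{j,k}p_jp_k[\tr(\rho^j_{AB}\rho^k_{AB})-\tr(\rho^j_B\rho^k_B)]$, the diagonal terms vanish on the extreme pieces, but the cross terms have no definite sign in general: the Hilbert--Schmidt inner product is \emph{not} monotone under partial trace. The paper's inequality \eqref{eq:ip_ineq_sepproof}, $\vec{R}_j\cdot\vec{R}_k\le\vec{r}_j\cdot\vec{r}_k$, is derived from monotonicity of the \emph{trace} distance and only translates into a Hilbert--Schmidt statement because all pieces share a single two-dimensional support --- which holds only when $\rho_{AB}$ itself has rank~2. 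For a full-rank $\rho_{AB}$ the pure-extendible pieces live on different two-dimensional subspaces and this machinery is unavailable; ``concavity of $\det^{1/4}$'' by itself does not show that $4\sqrt{\det\rho_{AB}}$ absorbs the deficit, and you give no inequality that does. (The paper's own suggested route to necessity is different: show that the set cut out by \eqref{eq:twoqubitextension} is convex, which together with Proposition~\ref{prop:extremal} would suffice; this too is left open.) The sufficiency step is likewise only an ansatz --- ``solve for supports and weights and hope the feasibility condition is exactly \eqref{eq:twoqubitextension}'' --- with no argument that a finite-term decomposition exists off the special classes. In short, there is no proof here to compare with, because neither you nor the paper proves the general statement; what you should do is either restrict your claim to the three special cases and reproduce the paper's arguments (interpolation between pure states for rank~2; the coordinate change \eqref{eq:alpha-transformation} for Bell-diagonal; the explicit extension \eqref{eq:stextension} for Z-correlated states), or supply the missing cross-term inequality for mixtures supported on distinct subspaces.
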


Using techniques from previous sections, 
we prove the conjecture for states of rank 2.
For Bell-diagonal states necessary and sufficient conditions have been derived using 
techniques from semidefinite programming \cite{myhr08suba}, 
and we show that our conjecture is equivalent to these conditions.
Finally we show that the conjecture is also true for another special class of states.


\subsection{Rank-2 states}

When $\rho_{AB}$ has rank 2 the determinant in \eqref{eq:twoqubitextension} vanishes, 
and since the remaining inequality only compares the purity of the states, 
we can as well use the maximum eigenvalues to compare it.

\begin{theorem}
\label{thm:rank2}
A 2-qubit state $\rho_{AB}$ of rank 2 has a symmetric extension if and only if 
\begin{equation}
\label{eq:rank2cond}
\lambda_\text{max}(\rho_{AB}) \leq \lambda_\text{max}(\rho_B)
\end{equation}
\end{theorem}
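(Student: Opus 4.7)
The plan is to work entirely inside the two-dimensional support $V$ of $\rho_{AB}$ and combine corollary \ref{cor:pure-extendible-decomp} with theorem \ref{thm:necsufqubit}, re-using the Bloch-like parametrization introduced in the proof of proposition \ref{prop:extremal}.

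First I would parametrize an arbitrary rank-$\le 2$ state on $V$ as $\rho=\frac{1}{2}(I_V+\vec{R}\cdot\vec{\Sigma})$ with $|\vec{R}|\le 1$; the reduced state then has the form $\rho_B=\frac{1}{2}(I_B+\vec{r}\cdot\vec{\sigma})$, with $\vec{r}=\vec{c}+M\vec{R}$ an affine function of $\vec{R}$. Since trace distances on $V$ and on $B$ read $|\vec{R}_1-\vec{R}_2|$ and $|\vec{r}_1-\vec{r}_2|$ respectively, contractivity of the partial trace forces $M^T M\le I$. The qubit identity $\lambda_\text{max}=(1+|\text{Bloch vector}|)/2$ then turns condition \eqref{eq:rank2cond} into $g(\vec{R}):=|\vec{r}|^2-|\vec{R}|^2\ge 0$, and the Hessian $2(M^T M-I)\le 0$ shows $g$ is a concave quadratic. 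By theorem \ref{thm:necsufqubit}, the pure-extendible states in $V$ are exactly those with $g(\vec{R})=0$: rank-2 ones sit on the quadric $\{g=0\}\cap\{|\vec{R}|<1\}$, and rank-1 ones sit on $\{|\vec{R}|=|\vec{r}|=1\}$, i.e., they are the product pure states in $V$.

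For necessity, corollary \ref{cor:pure-extendible-decomp} gives $\rho_{AB}=\sum_i p_i\rho_{AB}^i$ with each $\rho_{AB}^i$ pure-extendible, and a standard support argument ($0\le p_i\rho_{AB}^i\le\rho_{AB}$) places every piece inside $V$. Affinity of the maps $\rho\mapsto\vec{R}$ and $\rho\mapsto\vec{r}$ yields $\vec{R}=\sum_i p_i\vec{R}_i$, and concavity of $g$ together with $g(\vec{R}_i)=0$ gives immediately $g(\vec{R})\ge 0$, which is exactly \eqref{eq:rank2cond}. For sufficiency, I would consider the bounded convex set $S:=\{|\vec{R}|\le 1\}\cap\{g\ge 0\}\subset\mathbb{R}^3$. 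Its boundary is the union of the quadric piece $\{g=0\}\cap\{|\vec{R}|\le 1\}$ and the spherical piece $\{|\vec{R}|=1\}\cap\{g\ge 0\}$; because pure states automatically satisfy $g=|\vec{r}|^2-1\le 0$, the second piece collapses to points with $g=0$, i.e., to product pure states in $V$. Every boundary point of $S$ is therefore pure-extendible, so the Krein--Milman theorem in finite dimension yields $S=\mathrm{conv}(\text{pure-extendible states in }V)$. Corollary \ref{cor:pure-extendible-decomp} then delivers a symmetric extension of $\rho_{AB}$ whenever $\vec{R}\in S$, i.e., whenever \eqref{eq:rank2cond} holds.

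The main technical hurdle is the boundary analysis in the sufficiency direction: one must check carefully that the extreme points of $S$ are exhausted by the two types of pure-extendible states identified above, and deal with corner cases such as the quadric $\{g=0\}$ lying entirely inside the Bloch ball of $V$ (so product pure states are not needed) or $\rho_B$ being maximally mixed (so $\vec{c}=0$ and the geometry becomes symmetric about the origin). All these cases reduce to the same concavity argument once the parametrization and contractivity of the partial trace are in place.
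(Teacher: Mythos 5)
Your proof is correct, and while the necessity direction is essentially the paper's own argument in different clothing, the sufficiency direction takes a genuinely different route. For necessity, the paper also restricts to the two-dimensional support, uses contractivity of the partial trace under trace distance to get $\vec{R}_j\cdot\vec{R}_k \le \vec{r}_j\cdot\vec{r}_k$, and combines this with $|\vec{R}_j|=|\vec{r}_j|$ to conclude $|\vec{R}|\le|\vec{r}|$; your repackaging of exactly these ingredients as ``$M^TM\le I$, hence $g$ is concave, hence Jensen'' is equivalent but arguably cleaner. For sufficiency, the paper does something more hands-on: it considers only the one-parameter segment $\rho_{AB}^p=(1-p)\proj{\psi_0}+p\proj{\psi_1}$ through $\rho_{AB}$ along its own eigenbasis, observes that $\lambda_\text{max}(\rho_B^p)-\lambda_\text{max}(\rho_{AB}^p)$ is $\le 0$ at the pure endpoints and $\ge 0$ at $p=\lambda$, and uses the intermediate value theorem to locate two pure-extendible states $\rho_{AB}^{p_0},\rho_{AB}^{p_1}$ with $\rho_{AB}$ on the segment between them. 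Your argument instead characterizes the whole convex body $S=\{|\vec{R}|\le 1\}\cap\{g\ge 0\}$ and invokes Krein--Milman after checking that every extreme point satisfies $g=0$ (using that pure states obey $g\le 0$ to dispose of the spherical part of the boundary). The paper's version buys an explicit two-term decomposition and avoids any boundary/extreme-point bookkeeping; yours buys the stronger global statement that $S$ is precisely the convex hull of the pure-extendible states in the support, at the cost of the boundary analysis you flag. Both directions of your argument correctly restrict attention to the support of $\rho_{AB}$ (the standard fact $\support(\rho_{AB}^j)\subseteq\support(\rho_{AB})$ is exactly what the paper uses too), and theorem \ref{thm:necsufqubit} is applied in the same role, so there is no gap.
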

\begin{figure}
\includegraphics{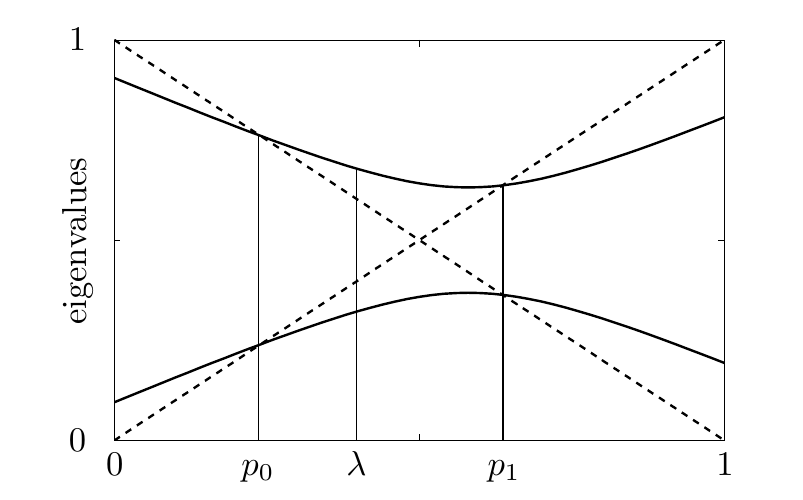}
\caption{\label{fig:rank2} Decomposition into pure-extendible states for two qubit states of rank 2. The dashed lines are global eigenvalues as parametrized on the $x$-axis. The solid lines are the local eigenvalues. The state with global eigenvalues $(1-\lambda,\lambda)$ is a convex combination of states with global eigenvalues $(1-p_0,p_0)$ and $(p_1,1-p_1)$, which have the same local eigenvalues and therefore a pure symmetric extension.} 
\end{figure}
\begin{proof}
We first prove the ``if'' part. Assume that $\rho_{AB}$ is a 2-qubit state of rank 2 that satisfies \eqref{eq:rank2cond}. We can write it in the spectral decomposition
\begin{equation}
  \rho_{AB} = (1-\lambda) \proj{\psi_0} + \lambda \proj{\psi_1}.
\end{equation}
Consider the class of states with the same eigenvectors as above, parametrized by  $p$, $\rho_{AB}^p = (1-p) \proj{\psi_0} + p \proj{\psi_1}$. 
Now $\rho_{AB} = \rho_{AB}^\lambda$. 
For $p=0$ and $p=1$ the corresponding pure states satisfy
$\lambda_\text{max}(\rho_{B}^{p}) \leq \lambda_\text{max}(\rho_{AB}^{p}) = 1$. 
Since at $p = \lambda$, $\lambda_\text{max}(\rho_{B}^{p}) \geq \lambda_\text{max}(\rho_{AB}^{p})$ by assumption and $\lambda_\text{max}$ is a continuous function of the parameter $p$, 
there must exist parameters $p_0 \in [0,\lambda]$, $p_1 \in [\lambda,1]$ such that 
$\lambda_\text{max}(\rho_{B}^{p_0}) = \lambda_\text{max}(\rho_{AB}^{p_0})$ 
and
$\lambda_\text{max}(\rho_{B}^{p_1}) = \lambda_\text{max}(\rho_{AB}^{p_1})$ (see figure \ref{fig:rank2}).  
From theorem \ref{thm:necsufqubit} we know that $\rho_{AB}^{p_0}$ and $\rho_{AB}^{p_1}$ have pure symmetric extensions, $\ket{\psi_{p_0}}_{ABB^\prime}$ and $\ket{\psi_{p_1}}_{ABB^\prime}$. Since $\rho_{AB}^\lambda$ is a convex combination $\rho_{AB}^\lambda = (1-q) \rho_{AB}^{p_0} + q \rho_{AB}^{p_1}$, where $q = (\lambda - p_0)/(p_1 - p_0)$, a symmetric extension of $\rho_{AB}^\lambda$ is $\rho_{ABB^\prime} = (1-q) \proj{\psi_{p_0}}_{ABB^\prime} + q \proj{\psi_{p_1}}_{ABB^\prime}$. 

Now for the ``only if'' part, assume that $\rho_{AB}$ is a bipartite state of rank 2 that has a symmetric extension to two copies of the qubit $B$ (in this part we do not use that $A$ is a qubit). 
Then by corollary \ref{cor:pure-extendible-decomp} it can be written as a convex combination of pure-extendible states
\begin{equation}
  \label{eq:ABdecomp}
  \rho_{AB} = \sum_j p_j \rho_{AB}^j
\end{equation}
and tracing out 
\begin{equation}
  \label{eq:Bdecomp}
  \rho_{B} = \sum_j p_j \rho_{B}^j.
\end{equation}
Like in the proof of proposition \ref{prop:extremal} we can use the fact that 
$\rho_{AB}$ has support on a 2-dimensional subspace to parametrize it using 
Pauli operators as in \eqref{eq:paulidecompAB}. 
Likewise we expand $\rho_{B}$ as in \eqref{eq:paulidecompB}, so 
\eqref{eq:ABdecomp} and \eqref{eq:Bdecomp} become
$\vec{R} = \sum_j p_j \vec{R}_j$ and $\vec{r} = \sum_j p_j \vec{r}_j$.
We can proceed exactly as in the previous proof to arrive at
\eqref{eq:ip_ineq_sepproof} which says that
$\vec{R}_j \cdot \vec{R}_k \leq \vec{r}_j \cdot \vec{r}_k$.
for all $i$ and $j$.

Since $\rho_{AB}^j$ are pure-extendible states, 
they have the same eigenvalues as the corresponding $\rho_{B}^j$ and therefore 
$|\vec{R}_j| = |\vec{r}_j|$.
Now we can use this and \eqref{eq:ip_ineq_sepproof} to compare $|\vec{R}|$ and $|\vec{r}|$,
\begin{align}
 |\vec{R}|^2 &= \biggl( \sum_j p_j \vec{R}_j \biggr) \biggl( \sum_k p_k \vec{R}_k \biggr) \\
             &= \sum_j p_j^2 |\vec{R}_j|^2 + 2 \sum_{j < k} p_j p_k \vec{R}_j \cdot \vec{R}_k \\
          &\leq \sum_j p_j^2 |\vec{r}_j|^2 + 2 \sum_{j < k} p_j p_k \vec{r}_j \cdot \vec{r}_k \\ 
             &= |\vec{r}|^2.
\end{align}
From $|\vec{R}| \leq |\vec{r}|$ and the relations to eigenvalues \eqref{eq:r_ev_AB} and \eqref{eq:r_ev_B} we can conclude that $\lambda_\text{max}(\rho_{AB}) \leq \lambda_\text{max}(\rho_B)$ which completes the proof.
\end{proof}

\begin{remark}
\label{remark:rank2necessary}
 The assumption that system $A$ is a qubit was only needed in the ``if'' part of the proof 
to conclude that states that satisfy the spectrum condition $\lambda_\text{max}(\rho_{B}) = \lambda_\text{max}(\rho_{AB})$ have a symmetric extension.
The rest of the proof, in particular the ``only if'' part, is independent of this assumption.
Therefore no $N \times 2$ state of rank 2 that satisfies $\lambda_\text{max}(\rho_{B}) < \lambda_\text{max}(\rho_{AB})$ can have a symmetric extension.
\end{remark}

\subsection{Bell-diagonal states}

Bell-diagonal states have eigenvectors 
$\ket{\Phi^\pm} = (\ket{00} + \ket{11})/\sqrt{2}$ and 
$\ket{\Phi^\pm} = (\ket{01} + \ket{10})/\sqrt{2}$, 
and are therefore defined by their eigenvalues $p_I, p_X, p_Y, p_Z$.
Any two-qubit state with maximally mixed subsystems is Bell-diagonal with 
the right choice of local basis \cite{horodecki96c}. 
For such states, necessary and sufficient conditions for symmetric extension 
have recently been found \cite{myhr08suba}. 
Parametrized by the following parameters,
\begin{subequations}
\label{eq:alpha-transformation}
\begin{align}
 \alpha_0 &:= p_I + p_X + p_Y + p_Z = 1\\
 \alpha_1 &:= p_I - p_X - p_Y + p_Z\\
 \alpha_2 &:= \sqrt{2} ( p_I - p_Z )\\
 \alpha_3 &:= \sqrt{2} ( p_X - p_Y ),
\end{align}
\end{subequations}
a state admits a symmetric extension if and only if at least one of the following inequalities is satisfied,
\begin{subequations}
\begin{gather}
\label{eq:rankoneZsufcond}
 4 \alpha_1 (\alpha_2^2-\alpha_3^2) - (\alpha_2^2 - \alpha_3^2)^2 - 4 \alpha_1^2 (\alpha_2^2 + \alpha_3^2) \geq 0 \\
\label{eq:rank2Zposconstraint1}
 \alpha_2^2-\alpha_3^2 - 2\sqrt{2}\alpha_1|\alpha_2| \geq 0 \\
\label{eq:rank2Zposconstraint2}
 \alpha_3^2-\alpha_2^2 + 2\sqrt{2}\alpha_1|\alpha_3| \geq 0.
\end{gather}
\end{subequations}

We now want to prove that these conditions are equivalent to \eqref{eq:twoqubitextension} for Bell-diagonal states. Since these states have maximally mixed subsystems, the conjectured condition becomes
\begin{equation}
  \label{eq:twoqubitextensionbell}
  4\sqrt{\det(\rho_{AB})} \geq \tr(\rho_{AB}^2) - \frac{1}{2},
\end{equation}
where $\det(\rho_{AB}) = p_I p_X p_Y p_Z$ and $\tr(\rho_{AB}^2) = p_I^2 + p_X^2 + p_Y^2 + p_Z^2$.
This is equivalent to at least one of the following inequalities holding
\begin{subequations}
\begin{align}
\label{eq:twoqubitextensionbellpurity}
 \tr(\rho_{AB}^2) & \leq \frac{1}{2}\\
\label{eq:twoqubitextensionbellsquared}
 16 \det(\rho_{AB}) &\geq [\tr(\rho_{AB}^2) - \frac{1}{2}]^2.
\end{align}
\end{subequations}
For the two sets of inequalies to be equivalent \emph{each} of the inequalities \eqref{eq:twoqubitextensionbellpurity}-\eqref{eq:twoqubitextensionbellsquared} must imply \emph{at least one} of \eqref{eq:rankoneZsufcond}-\eqref{eq:rank2Zposconstraint2} and vice versa.
By changing coordinates according to \eqref{eq:alpha-transformation} it is 
straightforward to show that \eqref{eq:twoqubitextensionbellsquared} is equivalent to \eqref{eq:rankoneZsufcond}.
For the other inequalities the relationship is more involved, but we prove that the sets of inequalities are equivalent in appendix \ref{app:belldiagproof}. 
Therefore conjecture \ref{conj:twoqubitextension} holds for Bell-diagonal states.

\subsection{Z-correlated states}

Finally, we consider states of the form
\begin{equation}
\label{eq:Zcorrform}
  \rho_{AB} = 
\begin{bmatrix}
p_1 & 0 & 0 & x \\
0 & p_2 & y & 0 \\ 
0 & y & p_3 & 0 \\ 
x & 0 & 0 & p_4 
\end{bmatrix}
\end{equation}
in the product basis $\ket{00}, \ket{01}, \ket{10}, \ket{11}$.
Without loss of generality we can assume that $p_1 \geq p_2, p_3, p_4$ and that $x$ and $y$ are both real and nonnegative, since this can be accomplished by changing the local basis.
This class includes the Bell-diagonal states as the special case 
where $p_1 = p_4$ and $p_2 = p_3$.
In this subsection we will show that the conjectured condition \eqref{eq:twoqubitextension} is necessary and sufficient in another special case of this class, namely when $y = 0$.

Let us first, however, simplify the problem for the whole class. 
The following lemma gives a necessary and sufficient condition for a state of the form \eqref{eq:Zcorrform} to have a symmetric extension.

\begin{lemma}
\label{lem:Zcorrnecsuf}
 A state of the form \eqref{eq:Zcorrform} has symmetric extension if and only if
there exist $s \in [0,p_2]$ and $t \in [0,\min(p_3,p_4)]$ such that 
\begin{subequations}
\begin{align}
  \label{eq:xcondst}
  x &\leq \sqrt{s}\sqrt{p_1-t} + \sqrt{t}\sqrt{p_4-s}\\
  \label{eq:ycondst}
  y &\leq \sqrt{s}\sqrt{p_2-t} + \sqrt{t}\sqrt{p_3-s}.
\end{align}
\end{subequations}
\end{lemma}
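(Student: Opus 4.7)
The plan is to exploit the $\sigma_z\otimes\sigma_z$-invariance of $\rho_{AB}$ visible from its block structure, reducing the existence of a symmetric extension to a pair of decoupled semidefinite problems, one in each total-parity sector. First I would show that the extension $\sigma_{ABB^\prime}$ can be taken parity-symmetric: averaging any symmetric extension with its $\sigma_z^A\otimes\sigma_z^B\otimes\sigma_z^{B^\prime}$-conjugate produces another symmetric extension of $\rho_{AB}$, since the conjugating operator commutes with $\swap$ and $\rho_{AB}$ is itself $\sigma_z\otimes\sigma_z$-invariant. Hence one may assume $\sigma_{ABB^\prime}$ is block-diagonal in the parity subspaces spanned by $\{\ket{000},\ket{011},\ket{101},\ket{110}\}$ and $\{\ket{001},\ket{010},\ket{100},\ket{111}\}$.

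I would then parametrize each $4\times 4$ parity block. Swap symmetry forces $\sigma_{001,001}=\sigma_{010,010}=:s$ and $\sigma_{101,101}=\sigma_{110,110}=:t$; the marginal condition pins the remaining diagonals to $p_1-s$, $p_2-s$, $p_3-t$, $p_4-t$, whence $s\in[0,p_2]$ and $t\in[0,\min(p_3,p_4)]$ by the WLOG ordering $p_1\ge p_2,p_3,p_4$. A direct calculation of $\tr_{B^\prime}$ on off-diagonal entries shows that only four of them reach the reduction: in the even-parity block $v:=\sigma_{000,110}=\sigma_{000,101}$ contributes to $x$ and $w:=\sigma_{011,101}=\sigma_{011,110}$ contributes to $y$, while in the odd-parity block $w^\prime:=\sigma_{001,111}=\sigma_{010,111}$ contributes to $x$ and $v^\prime:=\sigma_{010,100}=\sigma_{001,100}$ contributes to $y$. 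Thus $x=v+w^\prime$ and $y=w+v^\prime$, whereas the remaining entries $u=\sigma_{000,011}$, $z=\sigma_{101,110}$, $u^\prime=\sigma_{001,010}$, $z^\prime=\sigma_{100,111}$ are reduction-invisible and serve only as slack for positivity.

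The main step is to convert positivity of each parity block into tight bounds on the reduction-relevant amplitudes. Rotating each block to the swap eigenbasis splits it into a $3\times 3$ swap-symmetric piece and a $1\times 1$ antisymmetric piece; the latter impose only $\mathrm{Re}(z)\le t$ and $\mathrm{Re}(u^\prime)\le s$. For the even $3\times 3$ block $\bigl(\begin{smallmatrix}p_1-s & u & \sqrt{2}\,v\\ u^* & p_2-s & \sqrt{2}\,w\\ \sqrt{2}\,v^* & \sqrt{2}\,w^* & t+\mathrm{Re}(z)\end{smallmatrix}\bigr)$, taking the Schur complement against the $(3,3)$ corner and choosing $u = 2vw^*/(t+\mathrm{Re}(z))$ reduces positivity to $|v|^2\le(p_1-s)(t+\mathrm{Re}(z))/2$ and $|w|^2\le(p_2-s)(t+\mathrm{Re}(z))/2$; saturating $\mathrm{Re}(z)=t$ gives $|v|\le\sqrt{(p_1-s)t}$ and $|w|\le\sqrt{(p_2-s)t}$. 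The analogous Schur argument for the odd block, against the $(1,1)$ corner $s+\mathrm{Re}(u^\prime)$ with $\mathrm{Re}(u^\prime)=s$, delivers $|v^\prime|\le\sqrt{(p_3-t)s}$ and $|w^\prime|\le\sqrt{(p_4-t)s}$. The same $2\times 2$ principal minors supply the necessary direction for any PSD $\sigma$.

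Finally, combining the bounds with $x=v+w^\prime\le|v|+|w^\prime|$ and $y=w+v^\prime\le|w|+|v^\prime|$ (using $x,y\ge 0$) establishes necessity of \eqref{eq:xcondst}--\eqref{eq:ycondst}. For sufficiency, given $s,t$ satisfying those inequalities, I would pick real nonnegative $v,w,v^\prime,w^\prime$ splitting $x=v+w^\prime$ and $y=w+v^\prime$ under the individual caps (for instance $v=\min(x,\sqrt{(p_1-s)t})$, $w^\prime=x-v$; analogously for $w,v^\prime$), fill the reduction-invisible entries at their Schur-optimal values $u=vw/t$, $z=t$, $u^\prime=s$, $z^\prime=v^\prime w^\prime/s$ (with the obvious degenerate choices when $s$ or $t$ vanishes), and verify PSD of each parity block. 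The main obstacle is the Schur-complement step: ensuring that a single $u$ simultaneously saturates both the $|v|$ and $|w|$ bounds (and likewise $z^\prime$ for $|v^\prime|,|w^\prime|$) is precisely what couples the two lemma inequalities through the same pair $(s,t)$ rather than splitting them into four independent conditions.
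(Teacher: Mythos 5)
Your proof is correct. The necessity half is essentially the paper's own argument: both parametrize the diagonal of a symmetric extension by the two free parameters forced by the marginal and swap constraints, identify the two off-diagonal entries whose partial trace over $B'$ sums to $x$ (resp.\ $y$), and bound each one by a $2\times 2$ principal minor; your preliminary parity symmetrization is harmless but not actually needed for that direction. Where you genuinely depart from the paper is sufficiency. The paper exhibits the rank-2 swap-invariant extension \eqref{eq:stextension} that \emph{saturates} both inequalities and then reaches smaller $x,y$ by mixing with local-unitarily sign-flipped copies; that step is delicate as written, since the $\sigma_z$-type unitaries quoted there flip $x$ and $y$ simultaneously, so flipping them independently really requires diagonal phase gates. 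You instead construct the extension directly for any $(x,y)$ inside the region by splitting $x=v+w'$, $y=w+v'$ under the individual $2\times 2$ caps and filling the reduction-invisible entries $u,z,u',z'$ at their Schur-complement-optimal values; at saturation this reproduces \eqref{eq:stextension} exactly, and off saturation it bypasses the mixing argument entirely, which is cleaner and more self-contained. One bookkeeping remark: your $(s,t)$ is the transpose of the paper's (you attach $s$ to the $\ket{001},\ket{010}$ diagonal and $t$ to $\ket{101},\ket{110}$), so your final inequalities are \eqref{eq:xcondst}--\eqref{eq:ycondst} with $s\leftrightarrow t$. Note that the lemma as printed is itself inconsistent on this point --- its stated ranges $s\in[0,p_2]$, $t\in[0,\min(p_3,p_4)]$ do not match its own inequalities or the paper's proof and subsequent discussion, which require $t\le p_2$ and $s\le\min(p_3,p_4)$ --- and your version is one of the two consistent readings of the same existential statement, so nothing is lost.
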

\begin{proof}
 For the ``if'' part we give an explicit symmetric extension of the state for the case when the inequalities are saturated. 
The extended state is then the rank-2 state 
$\rho_{ABB'} = p \proj{\psi_1} + (1-p) \proj{\psi_2}$ where
\begin{equation}
\label{eq:stextension}
\begin{aligned}
\sqrt{p}\ket{\psi_1} &= \sqrt{p_1-t}\ket{000} + \sqrt{p_2-t}\ket{011} \\
                     & \quad + \sqrt{s}\ket{101} + \sqrt{s}\ket{110}\\
\sqrt{1-p}\ket{\psi_2} &= \sqrt{t}\ket{001} + \sqrt{t}\ket{010}  \\
                     & \quad + \sqrt{p_3-s}\ket{100} + \sqrt{p_4-s}\ket{111}.
\end{aligned}
\end{equation}
If a state has symmetric extension for a given $x$ and $y$, then also states with smaller $x$ or $y$ have symmetric extension.
This is because local unitaries can change the sign of either $x$ or $y$, $I \otimes \sigma_z$ will change the sign of $x$ while $\sigma_z \otimes \sigma_z$ does the same for $y$.
The resulting states will also have a symmetric extension.
Mixing the original state with one of these states will reduce either $x$ or $y$ of the original state, and convex combinations of extendible states also have symmetric extension. 
Hence, we can have inequality instead of equality in \eqref{eq:xcondst} and \eqref{eq:ycondst}.

For the ``only if'' part, a generic symmetric operator on $ABB'$ that reduces to 
\eqref{eq:Zcorrform} when $B'$ is traced out has the form. 
\begin{equation}
 \begin{bmatrix}
  p_1 - t & \times & \times & \times & \times & k_1 & k_1 & \times \\
  \times & t      & \times & \times & l_1     & \times & \times & k_2\\
  \times & \times & t      & \times & l_1     & \times & \times & k_2\\
  \times & \times & \times & p_2 - t & \times & l_2    & l_2    & \times\\
  \times & l_1^*  & l_1^*  & \times & p_3 - s & \times & \times & \times\\
  k_1^*  & \times & \times & l_2^*  & \times  & s      & \times & \times\\
  k_1^*  & \times & \times & l_2^*  & \times  & \times & s       & \times\\
  \times & k_2^*  & k_2^*  & \times & \times  & \times & \times & p_4-s\\
 \end{bmatrix}
\end{equation}
Here $k_1 + k_2 = x$ and $l_1 + l_2 = y$. 
For this to be postive semidefinite, all subdeterminants must be positive. 
From positivity of the subdeterminants 
\begin{equation*}
 \begin{vmatrix}
  p_1 -t & k_1 \\
  k_1^*  & s
 \end{vmatrix}
 \text{ and }
 \begin{vmatrix}
  t      & k_2 \\
  k_2^*  & p_4 -s
 \end{vmatrix}
\end{equation*}
we get that $x = k_1 + k_2 \leq |k_1| + |k_2| \leq \sqrt{s}\sqrt{p_1-t} + \sqrt{t}\sqrt{p_4-s}$. 
From the subdeterminants involving $l_1$ and $l_2$ we get
$y = l_1 + l_2 \leq |l_1| + |l_2| \leq \sqrt{t}\sqrt{p_3-s} + \sqrt{s}\sqrt{p_2-t}$.
\end{proof}

Since $p_1 \geq p_2$ the possible values for $t$ in \eqref{eq:xcondst} and \eqref{eq:ycondst} are between 0 and $p_2$.
The parameter $s$, however, is bounded from above by both $p_3$ and $p_4$. 
Before we go to the special case $y = 0$ we treat the case $p_3 \geq p_4$ separately, 
since knowing which of the two bounds applies will simplify the analysis.
When $p_3 \geq p_4$ the state has a symmetric extension for any $x$ and $y$, 
since even the rank-2 state by taking the maximum 
$x = \sqrt{p_1 p_4}$ and $y = \sqrt{p_2 p_3}$ has symmetric extension 
by theorem \ref{thm:rank2}. 
It is also easy to verify that in this case $\tr(\rho_{B}^2) \geq \tr(\rho_{AB}^2)$, 
so the condition \eqref{eq:twoqubitextension} is always satisfied.

In Appendix \ref{app:Zcorrcomp} we show that when $y=0$, maximizing the bound for $x$ in 
\eqref{eq:xcondst} gives the condition
\begin{equation}
\label{eq:Zcorrboundwithyzero}
 x \leq \left\{ 
\begin{aligned}
 &\sqrt{p_1 p_4} 
    \qquad \text{for} \quad p_1 p_3 + p_2 p_4 \geq p_1 p_4 \\
 &\sqrt{p_3}\sqrt{p_1-p_2} + \sqrt{p_2}\sqrt{p_4-p_3}  
    \quad \text{otherwise}.
\end{aligned}
\right.
\end{equation}
This is also what conjecture \ref{conj:twoqubitextension} reduces to in this case.
Therefore the conjecture holds for this class of states.

Any two qubit state with three degenerate eigenvalues will be of this class. 
In this case, $\ket{00}$ and $\ket{11}$ can be taken as the Schmidt basis vectors 
of the non-degenerate eigenvector. 
We can then write the state as 
$(\lambda_1 - \lambda) \proj{\psi} + \lambda/4 I$, where $\lambda_1$ is the non-degenerate eigenvalue, $\lambda$ the degenerate eigenvalue and $\ket{\psi}$ the non-degenerate eigenvector.
Since $I$ is diagonal and $\proj{\psi}$ only has an off-diagonal entry in the $x$ position, the state is of the form for which we have just proven that our conjecture holds.


\section{Application to (anti-)degradable channels}
\label{sec:degradable}

So far we have been interested in quantum states and whether it has a symmetric extension. 
We make the connection to \emph{degradable} \cite{devetak05b} and \emph{anti-degradable} \cite{wolf07a} quantum channels which are related concepts in quantum channel theory.
If a channel is degradable or anti-degradable this greatly simplifies the evaluation of the quantum capacity of the channel.

A quantum channel can be represented by a unitary operator acting jointly on the system and the environment, where the environment starts out in a pure state, followed by tracing out the environment. Given a channel $\mathcal{N}: \mathcal{N}(\rho) = \tr_E (U(\rho \otimes \proj{0}_E)U^\dagger)$, the \emph{complementary channel} is the channel to the environment, where the system is traced out, $\mathcal{N}^C (\rho) = \tr_S (U(\rho \otimes \proj{0}_E)U^\dagger)$.
The complementary channel is only defined up to a unitary on the output system, and the channel itself
is a complementary channel of its complementary channel.
A channel $\mathcal{N}$ is called degradable if there exists another channel $\mathcal{D}$, that will degrade the channel to the complementary channel when applied on the output, $\mathcal{N}^C = \mathcal{D} \circ \mathcal{N}$. Similarly, the channel is called anti-degradable if the complementary channel is degradable, $\mathcal{N} = \mathcal{D} \circ \mathcal{N}^C$.

Using the Choi-Jamio{\l}kowski isomorphism \cite{choi75a,jamiolkowski72a} we can represent any channel by the bipartite quantum state 
resulting from the channel acting on
one half of a maximally entangled state. 
We use the convention where Alice prepares a maximally entangled state and sends the second subsystem to Bob through the channel, a procedure that leaves the first subsystem maximally mixed \footnote{A more common convention is to let the channel act on the first subsystem of the operator $M = \sum_{j,k=0}^{d-1} \ket{jj}\bra{kk}$ which is a maximally entangled state normalized such that $\tr_A(M) = I_B$.}.
\begin{equation}
 \rho_\mathcal{N} = \frac{1}{d} \sum_{j,k=0}^{d-1} \ket{i}\bra{j} \otimes \mathcal{N}(\ket{i}\bra{j})
\end{equation}

Like in the rest of this paper, we always consider symmetric extensions to two copies of the second subsystem, which in the Choi-Jamio{\l}kowski representation represents the output system.
\begin{lemma}
\label{lem:symextantidegradable}
 A channel $\mathcal{N}$ is anti-degradable if and only if its Choi-Jamo{\l}kowski representation $\rho_\mathcal{N}$ has a symmetric extension.
\end{lemma}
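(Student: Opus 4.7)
The plan is to go back and forth via Stinespring dilations and the Choi-Jamio{\l}kowski isomorphism. For the ``only if'' direction, I would fix a Stinespring dilation $V:\mathcal{H}_A\to\mathcal{H}_B\otimes\mathcal{H}_E$ of $\mathcal{N}$ and apply it to one half of a maximally entangled state to obtain a purification $\ket{\psi}_{ABE}$ of $\rho_{\mathcal{N}}$. Tracing out $B$ gives the Choi state $\rho_{AE}$ of the complementary channel $\mathcal{N}^C$. Letting $\mathcal{D}:E\to B^\prime$ be a degrading channel (where $B^\prime$ is a copy of $B$) and applying $I_A\otimes I_B\otimes \mathcal{D}$ to $\proj{\psi}$ produces a state $\sigma_{ABB^\prime}$. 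By construction $\tr_{B^\prime}\sigma_{ABB^\prime}=\rho_{\mathcal{N}}$, while by anti-degradability $\tr_{B}\sigma_{ABB^\prime}=(I_A\otimes\mathcal{D})\rho_{AE}=\rho_{\mathcal{N}}$ as well. The state need not be swap-symmetric, so the final step is to symmetrize: $\sigma^\prime_{ABB^\prime}:=\frac{1}{2}(\sigma_{ABB^\prime}+\swap\,\sigma_{ABB^\prime}\,\swap)$ has marginal $\rho_{\mathcal{N}}$ on $AB$ and is swap-invariant, which is precisely a symmetric extension of $\rho_{\mathcal{N}}$ in the sense used throughout the paper.

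For the converse, suppose $\sigma_{ABB^\prime}$ is a symmetric extension of $\rho_{\mathcal{N}}$. Since the Choi state forces $\rho_A=I_A/d$, the $A$-reduction of $\sigma_{ABB^\prime}$ is also maximally mixed, so the Choi-Jamio{\l}kowski isomorphism identifies $\sigma_{ABB^\prime}$ with a channel $\mathcal{M}:A\to BB^\prime$. The hypothesis $\tr_{B^\prime}\sigma_{ABB^\prime}=\rho_{\mathcal{N}}$ together with the swap symmetry forces both partial compositions $\tr_{B^\prime}\circ\mathcal{M}$ and $\tr_{B}\circ\mathcal{M}$ to equal $\mathcal{N}$. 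Taking a Stinespring dilation $W:\mathcal{H}_A\to\mathcal{H}_B\otimes\mathcal{H}_{B^\prime}\otimes\mathcal{H}_F$ of $\mathcal{M}$ and regarding $E:=B^\prime F$ as the environment, the same $W$ is a Stinespring dilation of $\mathcal{N}$, and with this choice $\mathcal{N}^C(\rho)=\tr_B W\rho W^\dagger$ lives on $B^\prime F$. Tracing out $F$ and using the isomorphism $B^\prime\cong B$ returns $\tr_B \mathcal{M}(\rho)=\mathcal{N}(\rho)$, so $\mathcal{D}:=\tr_F$ degrades $\mathcal{N}^C$ to $\mathcal{N}$ and $\mathcal{N}$ is anti-degradable.

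The main obstacle is bookkeeping rather than any single hard step. The complementary channel is only defined up to a unitary on its output, so care is needed when matching the $E$ produced by one Stinespring dilation with the $B^\prime F$ produced by another; the argument works because anti-degradability is a property of the equivalence class of complementary channels, not of any one representative. A secondary subtlety is the symmetrization in the forward direction: equal marginals on $AB$ and $AB^\prime$ do not by themselves guarantee $\swap$-invariance, so one must explicitly average with the swap to land in the set of symmetric extensions used by the earlier lemmas.
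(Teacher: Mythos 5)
Your argument is correct and follows essentially the same route as the paper: the forward direction builds the tripartite state with equal $AB$ and $AE$ marginals and then averages with the swap, and your converse via a Stinespring dilation of $\mathcal{M}:A\to BB^\prime$ with $\mathcal{D}=\tr_F$ is the channel-language restatement of the paper's step of purifying the symmetric extension to $\ket{\psi}_{ABB^\prime R}$ and degrading by tracing out $R$. No gaps.
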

\begin{proof}
 Let the channel $\mathcal{N}$ be anti-degradable, and let $\mathcal{D}$ be the channel that degrades the complementary channel, $\mathcal{N} = \mathcal{D} \circ \mathcal{N}^C$.
Applying $\mathcal{N}$ on the second half of a maximally entangled state and applying $\mathcal{D}$ to the environment produces a tripartite state $\rho_{ABE}$ where the reduced states satisfy $\rho_{AB} = \rho_{AE} = \rho_\mathcal{N}$, but it does not need to be invariant under $\swapBE$. 
The state $1/2(\rho_{ABE} + \swapBE \rho_{ABE} \swapBE^\dagger)$ has the same reduced states and is also invariant under exchange of $B$ and $E$. It is therefore a symmetric extension of $\rho_\mathcal{N} = \rho_{AB}$.

Conversely, let the Choi-Jamio{\l}kowski representation $\rho_\mathcal{N}$ have a symmetric extension $\rho_{ABB^\prime}$. 
This satisfies $\rho_{AB} = \rho_{AB^\prime} = \rho_\mathcal{N}$ and has a purification $\ket{\psi}_{ABB^{\prime}R}$. 
The Choi-Jamio{\l}kowski representation of the complementary channel is then
$\rho_{AB'R}$ where $B'R$ is the output system.
Clearly, a degrading channel is then
$\mathcal{D}(\rho_{B^\prime R}) = \tr_R(\rho_{B^\prime R})$.
\end{proof}
This means that all necessary or sufficient conditions derived for symmetric extension are also necessary or sufficient conditions for the Choi-Jamio{\l}kowski representation of an anti-degradable channel. In particular, if conjecture \ref{conj:twoqubitextension} is true it will also characterize the anti-degradable qubit channels.
 
By interchanging the roles of the output and the environment, we can reduce to problem of deciding whether a channel is degradable to deciding whether the Choi-Jamio{\l}kowski representation of the complementary channel has a symmetric extension. A channel $\mathcal{N}$ with $d_A$-dimensional input, $d_B$-dimensional output and environment dimension of $d_E$ is degradable if and only if $\rho_{\mathcal{N}^C}$ of dimension $d_A \times d_E$ and rank $d_B$ has symmetric extension. Wolf and P{\'e}rez-Garc{\'i}a \cite{wolf07a} found that when $d_E = 2$, a qubit channel is either degradable, anti-degradable or both. This also follows from our theorem \ref{thm:rank2} about symmetric extension of rank-2 two-qubit states.. For qubit channels with larger environment there are examples of channels that are neither, even close to the identity channel \cite{smith07a}. Using the following theorem, we can show that no qubit channels with $d_E \geq 2$ can be degradable.

\begin{theorem}
\label{thm:limitedBrank}
 Any bipartite state $\rho_{AB}$ of rank 2 with a symmetric extension has a reduced state that satisfies $\rank(\rho_B) \leq 2$.
\end{theorem}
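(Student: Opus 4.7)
The plan is to purify the symmetric extension and use Schmidt rank considerations across an appropriate bipartition. Let $\rho_{ABB'}$ be a symmetric extension of $\rho_{AB}$, and introduce a reference system $R$ to obtain a purification $\ket{\Psi}_{ABB'R}$ with $\tr_{B'R} \proj{\Psi} = \rho_{AB}$.

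Next I would apply the Schmidt decomposition to $\ket{\Psi}$ with respect to the bipartition $AB \,|\, B'R$. The Schmidt rank of this decomposition equals the rank of either reduced state, so
\begin{equation*}
 \rank(\rho_{B'R}) = \rank(\rho_{AB}) = 2.
\end{equation*}
Since partial trace cannot increase rank, $\rank(\rho_{B'}) = \rank(\tr_R \rho_{B'R}) \leq \rank(\rho_{B'R}) = 2$. Finally, the symmetry of $\rho_{ABB'}$ under $\swap$ forces $\rho_B = \rho_{B'}$, giving $\rank(\rho_B) \leq 2$ as claimed.

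There is essentially no obstacle here: the argument is a one-line consequence of two standard facts (equal Schmidt ranks on a pure state, and monotonicity of rank under partial trace), combined with the defining symmetry of the extension. The only thing to be slightly careful about is remembering that $\rho_{ABB'}$ need not itself be pure, which is why the auxiliary reference $R$ is needed to carry out the Schmidt argument on a genuine pure state.
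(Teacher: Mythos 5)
There is a genuine gap, and it sits in the single step your argument actually rests on: ``partial trace cannot increase rank'' is false. A maximally entangled state of two qutrits, $\ket{\psi}_{B'R} = (\ket{00}+\ket{11}+\ket{22})/\sqrt{3}$, has $\rank(\rho_{B'R}) = 1$ while $\rank(\tr_R \rho_{B'R}) = 3$; tracing out a subsystem generically \emph{increases} rank, which is exactly why purifications exist at all. Your Schmidt step correctly gives $\rank(\rho_{B'R}) = \rank(\rho_{AB}) = 2$, but nothing then bounds $\rank(\rho_{B'})$. One can also see the proof must be broken from what it uses: the symmetry of the extension enters only through $\rho_B = \rho_{B'}$, and that condition alone cannot imply the theorem, since for \emph{any} $\rho_{AB}$ the product $\rho_{AB} \otimes \rho_B$ is an extension with $\rho_B = \rho_{B'}$, while rank-2 states with $\rank(\rho_B) = 3$ certainly exist (e.g.\ an equal mixture of $(\ket{00}+\ket{11}+\ket{22})/\sqrt{3}$ with an orthogonal pure state). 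So the full strength of $\sigma_{ABB'} = \swap \sigma_{ABB'} \swap^\dagger$ must be used, not just the equality of the marginals.

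The paper's route is to invoke corollary \ref{cor:pure-extendible-decomp} to write $\rho_{AB} = \sum_j p_j \rho_{AB}^j$ with each $\rho_{AB}^j$ pure-extendible, hence satisfying the spectrum condition of theorem \ref{thm:spectrumcondition}, and each supported inside the two-dimensional support of $\rho_{AB}$. One then argues by cases: if every $\rho_{AB}^j$ has rank 1 the spectrum condition forces them to be product vectors whose $B$-parts span at most two dimensions; if some $\rho_{AB}^j$ has rank 2, its two eigenvectors span the support of $\rho_{AB}$ and, by the spectrum condition, involve only the two eigenvectors of its own rank-2 marginal on $B$, so every component (and hence $\rho_B$) is supported on that two-dimensional subspace. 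If you want to salvage a purification-based argument you would need to feed in this pure-extendible structure; the bare Schmidt-rank count does not suffice.
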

\begin{proof}
 By corollary \ref{cor:pure-extendible-decomp} $\rho_{AB}$ can be decomposed into pure-extendible states
\begin{equation}
 \rho_{AB} = \sum_j p_j \rho_{AB}^j
\end{equation}
where the $\rho_{AB}^j$ all satisfy the spectrum condition \eqref{eq:spectrumcondition}. Since $\rho_{AB}$ is of rank 2, 
$\rank(\rho_{AB}^j) \leq 2$ for all $j$. 

If $\max_j \rank(\rho_{AB}^j) = 1$, all the pure-extendible states are pure product states $\rho_{AB}^j = \proj{\psi_j \otimes \phi_j}$ by \eqref{eq:spectrumcondition}. Because the rank of $\rho_{AB}$ is 2, there can only be two independent product vectors, say $\ket{\phi_1 \otimes \psi_1}$ and $\ket{\phi_2 \otimes \psi_2}$, so the support of $\rho_B$ is spanned by $\psi_1$ and $\psi_2$ and therefore at most two-dimensional.

If there is at least one $j$ such that $\rank(\rho_{AB}^j) = 2$, this defines a 2-dimensional subspace where all other $\rho_{AB}^j$ must have their support. Let $\rho_{AB}^1$ be one of the $\rho_{AB}^j$ with rank 2. Let the spectral decomposition for it and its reduction to $B$ be $\rho_{AB}^1 = \gamma \proj{\phi_0} + (1-\gamma) \proj{\phi_1}$ and $\rho_B^1 = \gamma \proj{0} + (1 - \gamma) \proj{1}$, respectively, in accordance with the spectrum condition \eqref{eq:spectrumcondition}. The eigenvectors of $\rho_{AB}^1$ can be decomposed as $\ket{\phi_k} = \ket{\widetilde{\psi}_{k0}}_A \ket{0}_B + \ket{\widetilde{\psi}_{k1}}_A \ket{1}_B$, where maximum one of the four unnormalized $\ket{\widetilde{\psi}_{kl}}_A$ can be the zero vector. Since all the other $\rho_{AB}^j$ have to have support within $\mathspan\{\ket{\phi_1},\ket{\phi_2}\}$, they can only ever have reduced states $\rho_B^j$ that are supported on $\mathspan\{\ket{0},\ket{1} \}$. Therefore, also $\rho_B$ is supported on $\mathspan\{\ket{0},\ket{1} \}$ and has $\rank(\rho_{B}) \leq 2$.
\end{proof}

This reduces the $N \times M$ symmetric extension problem for states of rank 2 to $N \times 2$. 
From remark \ref{remark:rank2necessary} we already have a necessary condition for this case, 
namely that $\lambda_\text{max}(\rho_{B}) \geq \lambda_\text{max}(\rho_{AB})$. 
This also generalizes theorem \ref{thm:rank2} to give necessary and sufficient conditions for symmetric extension of a $2 \times N$ state of rank 2.
Such a state has symmetric extension if and only if  $\lambda_\text{max}(\rho_{B}) \geq \lambda_\text{max}(\rho_{AB})$ and $\rho_{B}$ is of rank 2.

From the connection between symmetric extension and anti-degradable channels in lemma \ref{lem:symextantidegradable} the following corollary automatically follows.

\begin{corollary}
 Any anti-degradable channel $\mathcal{N}$ with qubit environment has output of rank 2. If $\rho_\mathcal{N}$ is the Choi-Jamio{\l}kowski state representing the channel, $\lambda_\text{max}(\rho_\mathcal{N}) \leq \lambda_\text{max}(\tr_A \rho_\mathcal{N})$. 
\end{corollary}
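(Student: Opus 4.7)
The plan is to combine Lemma \ref{lem:symextantidegradable} with Theorem \ref{thm:limitedBrank} and Remark \ref{remark:rank2necessary}, after translating the hypothesis ``qubit environment'' into a statement about the rank of the Choi--Jamio{\l}kowski state.

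First I would note that the rank of $\rho_\mathcal{N}$ equals the Kraus rank of $\mathcal{N}$, which in turn equals the minimal environment dimension of a Stinespring dilation. Hence a qubit environment forces $\rank(\rho_\mathcal{N}) \leq 2$. (If the rank is $1$, then $\mathcal{N}$ is an isometry, which cannot be anti-degradable except in trivial cases; one handles this as a degenerate boundary case, and the rank-$2$ case is the substantive one.)

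Next, since $\mathcal{N}$ is anti-degradable, Lemma \ref{lem:symextantidegradable} gives that $\rho_\mathcal{N}$ admits a symmetric extension. Combining with $\rank(\rho_\mathcal{N}) \leq 2$, Theorem \ref{thm:limitedBrank} immediately yields $\rank(\tr_A \rho_\mathcal{N}) \leq 2$, which is the ``output of rank $2$'' claim. Operationally this means the channel output lives in a $2$-dimensional subspace of $\mathcal{H}_B$, regardless of how large the nominal output space is.

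For the eigenvalue inequality, I would restrict attention to the (at most) $2$-dimensional support of $\tr_A \rho_\mathcal{N}$, so that $\rho_\mathcal{N}$ is effectively an $N \times 2$ state of rank at most $2$ still admitting a symmetric extension. Remark \ref{remark:rank2necessary} then directly gives $\lambda_\text{max}(\rho_\mathcal{N}) \leq \lambda_\text{max}(\tr_A \rho_\mathcal{N})$. There is no real obstacle here --- the only subtle point is the identification $\rank(\rho_\mathcal{N}) = d_E^{\min}$ so that the qubit-environment hypothesis feeds correctly into Theorem \ref{thm:limitedBrank}; once that bridge is in place, everything is an application of previously proved results.
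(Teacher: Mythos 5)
Your proposal is correct and follows essentially the same route as the paper: translate "qubit environment" into $\rank(\rho_\mathcal{N})\leq 2$ via the Choi/Kraus rank, invoke Lemma \ref{lem:symextantidegradable} to get a symmetric extension, then apply Theorem \ref{thm:limitedBrank} for the rank bound and Remark \ref{remark:rank2necessary} for the eigenvalue inequality. Your explicit handling of the rank-$1$ degenerate case and of the rank--environment identification is slightly more careful than the paper's one-line "automatically follows," but it is the same argument.
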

Exchanging the output and the environment changes anti-degradability into degradability:
\begin{corollary}
\label{cor:degradableenvlimit}
 Any degradable channel with qubit output has $d_E \leq 2$. If $\rho_\mathcal{N}$ is the Choi-Jamio{\l}kowski state representing the channel, $\lambda_\text{max}(\tr_A \rho_\mathcal{N}) \leq \lambda_\text{max}(\rho_\mathcal{N})$.
\end{corollary}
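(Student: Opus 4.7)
The plan is to reduce this corollary to the anti-degradable case by exchanging the roles of output and environment, as already hinted in the sentence preceding the statement. If $\mathcal{D}$ degrades $\mathcal{N}$, \ie $\mathcal{N}^C = \mathcal{D}\circ\mathcal{N}$, then this same identity is exactly the anti-degradability of $\mathcal{N}^C$. The environment of $\mathcal{N}^C$ is $B$, which by hypothesis is a qubit, so the corollary immediately preceding (for anti-degradable channels with qubit environment) applies to $\mathcal{N}^C$ and furnishes both $\rank(\tr_A \rho_{\mathcal{N}^C}) \leq 2$ and the spectral bound $\lambda_\text{max}(\rho_{\mathcal{N}^C}) \leq \lambda_\text{max}(\tr_A \rho_{\mathcal{N}^C})$.

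For the dimension bound, I would note that $\tr_A \rho_{\mathcal{N}^C} = \mathcal{N}^C(I_A/d_A)$, and since $\rho \leq d_A \cdot (I_A/d_A)$ for every input, positivity of $\mathcal{N}^C$ forces $\support \mathcal{N}^C(\rho) \subseteq \support \mathcal{N}^C(I_A/d_A)$. Hence the entire image of $\mathcal{N}^C$ sits inside a two-dimensional subspace of $E$, which is to say that the minimal Stinespring environment of $\mathcal{N}$ is at most two-dimensional, \ie $d_E \leq 2$.

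For the spectral inequality, the decisive observation is that $\rho_{\mathcal{N}}$ and $\rho_{\mathcal{N}^C}$ are the $AB$- and $AE$-reductions of a single pure state, obtained by applying the Stinespring isometry of $\mathcal{N}$ to the input half of Alice's maximally entangled state. Schmidt decomposition across the cuts $AB|E$ and $AE|B$ then gives $\spec{\rho_{\mathcal{N}}} = \spec{\rho_E} = \spec{\tr_A \rho_{\mathcal{N}^C}}$ and $\spec{\tr_A \rho_{\mathcal{N}}} = \spec{\rho_B} = \spec{\rho_{\mathcal{N}^C}}$. Substituting these identifications into the inequality inherited from the anti-degradable corollary yields $\lambda_\text{max}(\tr_A \rho_{\mathcal{N}}) \leq \lambda_\text{max}(\rho_{\mathcal{N}})$, which is the claim.

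I expect essentially no real obstacle here: the entire argument is a mechanical transposition of the previous corollary under the complementary-channel duality. The only care-point worth flagging is that ``$d_E \leq 2$'' must be read as a statement about the minimal/effective environment dimension (controlled by $\rank \mathcal{N}^C(I_A/d_A)$) rather than any a priori chosen nominal $d_E$ in a non-minimal Stinespring dilation.
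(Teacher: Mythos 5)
Your proposal is correct and follows essentially the same route as the paper: the paper derives this corollary precisely by exchanging output and environment, i.e.\ by applying the preceding anti-degradability corollary (itself a consequence of theorem \ref{thm:limitedBrank} and remark \ref{remark:rank2necessary}) to $\mathcal{N}^C$ and using that $\rho_{\mathcal{N}}$ and $\rho_{\mathcal{N}^C}$ are reductions of a common purification, so that $\spec{\rho_{\mathcal{N}}} = \spec{\tr_A\rho_{\mathcal{N}^C}}$ and $\spec{\rho_{\mathcal{N}^C}} = \spec{\tr_A\rho_{\mathcal{N}}}$. You merely make explicit the spectral identifications and the support argument for the minimal environment dimension that the paper leaves implicit, and your caveat about reading $d_E \leq 2$ as a statement about the minimal Stinespring environment is well taken.
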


This result has recently been independently obtained by Cubitt et~al.~\cite{cubitt08suba}.
One could imagine that theorem \ref{thm:limitedBrank} would generalize to higher rank
so that the rank of the $\rho_B$ system always would be bounded by the rank of $\rho_{AB}$ for symmetric extendible states. 
This would mean that the dimension of the environment always would be bounded by the output rank for degradable channels.
However, Cubitt et~al.~\cite{cubitt08suba} has proved that this only holds for channels with qubit and qutrit outputs. 

If the rank of a symmetric extendible state is $R$, 
the above proof can fail only if $1 < \max_j \rank(\rho_{AB}^j) < R$.
This gives the following corollary:
\begin{corollary}
\label{cor:highBrankstructure}
If $\rho_{AB}$ has a $(1,2)$-symmetric extension and $\rank(\rho_B) > \rank(\rho_{AB})$, 
then for any decomposition into pure-extendible states
\begin{equation*}
 \rho_{AB} = \sum_j p_j \rho_{AB}^j,
\end{equation*}
$\rank(\rho_{AB}^j) < \rank(\rho_{AB})$ for all $j$.
\end{corollary}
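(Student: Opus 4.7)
The plan is to generalize the argument of Theorem~\ref{thm:limitedBrank} from rank~$2$ to the arbitrary rank $R := \rank(\rho_{AB})$. The approach is proof by contradiction: assume some pure-extendible decomposition $\rho_{AB} = \sum_j p_j \rho_{AB}^j$ contains a component $\rho_{AB}^{j_0}$ with $\rank(\rho_{AB}^{j_0}) = R$, and derive $\rank(\rho_B) \leq R$, contradicting the hypothesis $\rank(\rho_B) > \rank(\rho_{AB})$. Since $\rank(\rho_{AB}^j) \leq R$ holds automatically for every $j$ (all components lie in the support of their convex sum), ruling out equality establishes the strict inequality claim.

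Two preliminary observations make the proof go through. First, because the decomposition is a convex combination with positive weights into an $R$-dimensional support, $\support(\rho_{AB}^j) \subseteq \support(\rho_{AB})$ for all $j$; combined with $\rank(\rho_{AB}^{j_0}) = R$, this inclusion is an equality for $j_0$, so $\support(\rho_{AB}^{j_0}) = \support(\rho_{AB})$. Second, $\rho_{AB}^{j_0}$ is pure-extendible, so by Theorem~\ref{thm:spectrumcondition} one has $\rank(\rho_B^{j_0}) = \rank(\rho_{AB}^{j_0}) = R$, and thus $\support(\rho_B^{j_0})$ is an $R$-dimensional subspace of $\mathcal{H}_B$.

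The key step, exactly as in Theorem~\ref{thm:limitedBrank}, is the elementary support-containment identity $\support(\rho_{AB}^{j_0}) \subseteq \mathcal{H}_A \otimes \support(\rho_B^{j_0})$: writing $P$ for the projector onto $\support(\rho_B^{j_0})$, one has $\tr\bigl((I_A \otimes (I - P))\rho_{AB}^{j_0}\bigr) = \tr\bigl((I - P)\rho_B^{j_0}\bigr) = 0$, and positivity of $\rho_{AB}^{j_0}$ forces it to be annihilated on both sides by $I_A \otimes (I - P)$. Combined with $\support(\rho_{AB}) = \support(\rho_{AB}^{j_0})$, this gives $\support(\rho_{AB}^j) \subseteq \mathcal{H}_A \otimes \support(\rho_B^{j_0})$ for every $j$, hence $\support(\rho_B^j) \subseteq \support(\rho_B^{j_0})$, and therefore $\support(\rho_B) \subseteq \support(\rho_B^{j_0})$, so $\rank(\rho_B) \leq R$, the required contradiction.

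No real obstacle is expected: the argument is morally just the ``$\max_j \rank(\rho_{AB}^j) = 2$'' branch of the proof of Theorem~\ref{thm:limitedBrank}, rephrased for arbitrary $R$ using the support-containment identity as a black box. The ``$\max_j \rank = 1$'' branch treated separately there is not needed here, since a $\rho_{AB}^j$ of rank~$1$ automatically satisfies $\rank < R$ when $R \geq 2$, while for $R = 1$ the hypothesis $\rank(\rho_B) > \rank(\rho_{AB})$ is already incompatible with the existence of a symmetric extension (by the spectrum condition, a pure-extendible pure state must be a product).
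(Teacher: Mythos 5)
Your proof is correct and follows essentially the same route as the paper: assume some component $\rho_{AB}^{j_0}$ of the pure-extendible decomposition attains the full rank $R$, use the spectrum condition to conclude $\rank(\rho_B^{j_0}) = R$, observe that $\support(\rho_{AB}) = \support(\rho_{AB}^{j_0}) \subseteq \mathcal{H}_A \otimes \support(\rho_B^{j_0})$, and deduce $\rank(\rho_B) \leq R$, contradicting the hypothesis. The paper phrases the support containment via the expansion $\ket{\phi_k} = \sum_{m=1}^R \ket{\widetilde{\psi}_{km}}\ket{m}$ of the eigenvectors of $\rho_{AB}^{j_0}$, which is the same fact you establish with the projector-trace argument.
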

\begin{proof}
Assume that $\max_j \rank(\rho_{AB}^j) = \rank(\rho_{AB}^1) = \rank(\rho_{AB}) =: R$.
Let the spectral decomposition of $\rho_{AB}^1$ and its reduced state be
$\rho_{AB}^1 = \sum_{k=1}^R \gamma_j \proj{\phi_k}$ 
and $\rho_{B}^1 = \sum_{k=1}^R \gamma_j \proj{k}$. 
The eigenvectors of $\rho_{AB}^1$ can then be written as 
$\ket{\phi_k} = \sum_{m=1}^R \ket{\widetilde{\psi}_{km}}\ket{m}$.
Since $\rho_{AB}^1$ has the full rank of $\rho_{AB}$, 
the support of $\rho_{AB}$ must be the space spanned by the eigenvectors of $\rho_{AB}^1$.
This means that $\rho_{B}$ has support on $\mathspan\{ \{\ket{m}\}_{m=1}^R \}$ 
and therefore has rank $R$. 
Therefore, if $\rank(\rho_B) > R$ we cannot have $\max_j \rank(\rho_{AB}^j) = \rank(\rho_{AB})$.
\end{proof}

\section{Discussion and Conclusions} 

In this work we have characterized states with symmetric extension by decomposing them into states with a pure symmetric extension. 
For two-qubits we have fully characterized these pure-extendible states and quite remarkably this characterization only depends on the global and one of the local spectra of the density matrix.
Even given this result, it is rather surprising that knowledge of this information also seems to be sufficient for deciding whether or not a generic two-qubit state has a symmetric extension.
Although we cannot prove this in general, the special cases for which we prove it and extensive numerical testing suggest that our conjecture holds for all two-qubit states.
Actually, proving that the inequality \eqref{eq:twoqubitextension} describes a convex set will be sufficient for proving that it is a necessary condition for symmetric extension, since we have proven that the extremal extendible states are all contained in this set.
One way to prove the sufficiency of the condition, is to find a way to decompose any state that satisfies it either into pure-extendible states or into extendible states of any of the classes for which we have proven that the conjecture holds.

When either of the subsystems is larger than a qubit, symmetric extendibility does not only depend on local and global eigenvalues. 
In any higher dimension there are states without symmetric extension which have the same spectra as states with pure-symmetric extension. 
It would nevertheless be interesting to know if the convex hull of the states that satisfies the spectrum condition \eqref{eq:spectrumcondition}
can be characterized in a way similar to \eqref{eq:twoqubitextension}.
Such a condition would provide a useful necessary condition for a state to have a symmetric extension.

The isomorphism between quantum channels and bipartite quatum states allows us to use our results for quantum states to make some interesting statements about quantum channels.
States with symmetric extension correspond to anti-degradable channels, and by interchanging the output and the environment we can also make statements about degradable channels.
Our corollary \ref{cor:degradableenvlimit} says that if the output of a quantum channel is a qubit, it can only be degradable if the environment also is a qubit, a result that follows from our conditions on symmetric extendible states of rank 2.
When the dimension of the channel output is higher, the environment dimension of degradable channels is not always bounded by this. 
Corollary \ref{cor:highBrankstructure} gives a condition on the structure of degradable channels with higher environment dimension than output dimension.

\begin{acknowledgments}
The authors would like to thank Debbie Leung, Matthias Christandl, Andrew Doherty, Joseph Renes and Barbara Kraus and especially Marco Piani and Tobias Moroder for stimulating discussions.
Marco Piani has also provided the last part of the proof for proposition \ref{prop:extremal} and Tobias Moroder has provided some of the Matlab scripts based on YALMIP \cite{yalmip} and SDPT3 \cite{sdpt3} that were used for numerical testing.
This work was funded by the Research Council of Norway, project No.~166842/V30, the European Union through the IST Integrated Project SECOQC and the IST-FET Integrated Project QAP and by the NSERC Discovery Grant.
\end{acknowledgments}

\appendix


\section{Bosonic and fermionic extensions}
\label{app:bosonicfermionic}

In this paper we have used the term ``symmetric extension'' for extensions that are invariant under exchange of two systems, without considering if its support is on the symmetric or antisymmetric subspace or both. 
An extension that resides only on the symmetric subspace of 
$\mathcal{H}_B \otimes \mathcal{H}_{B'}$
we call a \emph{bosonic extension}, while one that resides on the anti-symmetric subspace is 
a \emph{fermionic extension}.
Generic symmetric extensions are mixtures of bosonic and fermionic extensions.
Bosonic ($+$) and fermionic ($-$) extensions satisfy $1/2(I \pm \swap) \rho_{ABB^\prime} 1/2(I \pm \swap)$ and $\swap \rho_{ABB'} = \pm \rho_{ABB'}$. 
Here we show that when the subsystem to be extended is a qubit, 
the states with symmetric and bosonic extension coincides, but this is not true in general.

\begin{proposition}
  If a quantum state $\rho_{AB}$ of dimension $N \times 2$ has symmetric extension to $\rho_{ABB^\prime}$ it also has a bosonic extension $\sigma_{ABB^\prime}$, i.~e. that satisfies also 
\begin{equation}
 \sigma_{ABB^\prime} = \frac{1}{2}(I + \swap) \sigma_{ABB^\prime} \frac{1}{2}(I + \swap)
\end{equation}
\end{proposition}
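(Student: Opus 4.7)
My plan is to decompose the given symmetric extension into its bosonic and fermionic parts, observe that for a qubit $B$ the fermionic part is highly constrained, and substitute it by a bosonic state with the same $B$-marginal.

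First, I would exploit the commutation $\rho_{ABB^\prime} \swap = \swap \rho_{ABB^\prime}$ (already established in the proof of Lemma~\ref{lem:symspecdecomp}) to split
\[\rho_{ABB^\prime} = \rho^+ + \rho^-, \qquad \rho^\pm := \Pi^\pm \rho_{ABB^\prime} \Pi^\pm,\]
where $\Pi^\pm := \tfrac{1}{2}(I \pm \swap)$ are the projectors onto the symmetric and antisymmetric subspaces of $\mathcal{H}_B \otimes \mathcal{H}_{B^\prime}$. Both $\rho^\pm$ are positive and live on $\mathcal{H}_A$ tensored with the respective $\pm$-subspace.

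Second, I would use that $\dim \mathcal{H}_B = 2$ makes the antisymmetric subspace of $\mathcal{H}_B \otimes \mathcal{H}_{B^\prime}$ one-dimensional, spanned by the singlet $\ket{\Psi^-}$. Any positive operator supported on a product of $\mathcal{H}_A$ with a one-dimensional factor is automatically of product form, so
\[\rho^- = \omega_A \otimes \proj{\Psi^-}\]
for some positive operator $\omega_A$ on $\mathcal{H}_A$ with $\tr \omega_A = \tr \rho^-$.

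Finally, since $\tr_{B^\prime} \proj{\Psi^-} = I_B/2$, the fermionic part contributes $\omega_A \otimes I_B/2$ to $\rho_{AB}$. I would replace it by $\omega_A \otimes \xi_{BB^\prime}$, where $\xi_{BB^\prime}$ is any bosonic state on $BB^\prime$ with $B$-marginal $I_B/2$; the explicit choice $\xi_{BB^\prime} := \tfrac{1}{2}(\proj{00} + \proj{11})$ is supported on the symmetric subspace and has the required marginal. Then the operator
\[\sigma_{ABB^\prime} := \rho^+ + \omega_A \otimes \xi_{BB^\prime}\]
is manifestly positive, normalized, supported on $\mathcal{H}_A \otimes \mathrm{Sym}(\mathcal{H}_B \otimes \mathcal{H}_{B^\prime})$, and satisfies $\tr_{B^\prime} \sigma_{ABB^\prime} = \tr_{B^\prime}\rho^+ + \omega_A \otimes I_B/2 = \tr_{B^\prime}\rho_{ABB^\prime} = \rho_{AB}$, so it is the desired bosonic extension.

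I do not anticipate a serious obstacle: the whole argument is driven by the one-dimensionality of the antisymmetric subspace of two qubits, which forces $\rho^-$ to factor. For $\dim \mathcal{H}_B \ge 3$ this step would break down since the antisymmetric subspace is multi-dimensional and $\rho^-$ could carry non-trivial $A$-correlations incompatible with a product replacement, consistent with the proposition being stated only for $N \times 2$ systems.
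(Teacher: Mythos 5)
Your proof is correct and follows essentially the same route as the paper: both arguments hinge on the antisymmetric subspace of two qubits being one-dimensional, forcing the fermionic part to factor as $\omega_A \otimes \proj{\Psi^-}$, which is then swapped for a symmetric state with the same $B$-marginal $I_B/2$. The only cosmetic difference is that the paper performs the replacement eigenvector-by-eigenvector in the spectral decomposition (substituting $\ket{\Psi^+}$ for $\ket{\Psi^-}$), whereas you project onto the antisymmetric block all at once and use $\tfrac{1}{2}(\proj{00}+\proj{11})$ as the bosonic substitute.
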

\begin{proof}
 Decompose the extended state $\rho_{ABB^\prime}$ with the spectral decomposition as in lemma \ref{lem:symspecdecomp},
\begin{equation}
 \rho_{ABB^\prime} = \sum_j \lambda_j^+ \proj{\phi_j^+} + \sum_k \lambda_k^- \proj{\phi_k^-}
\end{equation}
where $\ket{\phi_j^+} = \swap \ket{\phi_j^+}$ and $\ket{\phi_k^-} = - \swap \ket{\phi_k^-}$ are symmetric and antisymmetric, respectively. 
The vectors are of the form 
$\ket{\phi_j^\pm} = \sum_k \alpha_{jk} \ket{\psi_{jk}}_A \ket{\psi_{k}^\pm}_{BB^\prime}$ where $\ket{\psi_k^\pm}_{BB^\prime}$ are in the symmetric and antisymmetric space of $BB^\prime$. 
When $B$ and $B^\prime$ are qubits the antisymmetric space is one-dimensional and is spanned by the vector $\ket{\Psi^-} = 1/\sqrt{2}(\ket{01} - \ket{10})$. 
The antisymmetric vectors are therefore of the product form 
$\ket{\phi_k^-} = \ket{\psi_k}_A \ket{\Psi^-}_{BB^\prime}$. 
Replacing them with symmetric vectors of the form 
$\ket{\xi_k^+} = \ket{\psi_k}_A \ket{\Psi^+}_{BB^\prime}$ 
where $\ket{\Psi^+} = 1/\sqrt{2}(\ket{01} + \ket{10})$ yields a state 
\begin{equation}
 \sigma_{ABB^\prime} = \sum_j \lambda_j^+ \proj{\phi_j^+} + \sum_k \lambda_k^- \proj{\xi_k^+}
\end{equation}
which has support on the symmetric subspace. Note that $\lambda_j^+$ and $\lambda_j^-$ are no longer eigenvalues of this state. But since the reduced states of $\ket{\xi_k^+}$ are the same as for $\ket{\phi_k^-}$, we have that $\rho_{AB} := \tr_{B^\prime} \rho_{ABB^\prime} = \tr_{B^\prime} \sigma_{ABB^\prime}$, so $\sigma_{ABB^\prime}$ is a valid bosonic extension of $\rho_{AB}$. 
\end{proof}

To show that this is an effect of the low dimension of the $B$ system, we give an example of a state of two qutrits that has a fermionic but not a bosonic extension.

\begin{example}
  Consider a tripartite pure state on $ABB^\prime$ of the form
\begin{equation}
\ket{\psi} = \alpha (\ket{012} - \ket{021}) + \beta (\ket{120} - \ket{102}) + \gamma (\ket{201} - \ket{210})
\end{equation}
where $\alpha, \beta, \gamma \ne 0$. 
This is a fermionic extension of the reduced state $\rho_{AB} = \tr_{B^\prime} \proj{\psi}$. 
If $\rho_{AB}$ had a bosonic extension, a trace preserving and completely positive (TPCP) map on $B'$ would be able to convert any purification of $\rho_{AB}$ into this bosonic extension. 
If the TPCP map is given by its Kraus operators ${K_j}$ which satisfy $\sum_j K_j^\dagger K_j = I_{B'}$, the output state when applied to $\ket{\psi}$ would be
\begin{equation}
 \sigma_{ABB'} = \sum_j (I_A \otimes I_B \otimes K_j) \proj{\psi} (I_A \otimes I_B \otimes K_j)^\dagger .
\end{equation}
If $\sigma_{ABB'}$ is a bosonic extension, all the terms in this sum must be on the symmetric subspace.
Consider one of the Kraus operators, $K$.
Applying it to $\ket{\psi}$ gives 
\begin{equation}
 (I_A \otimes I_B \otimes K) \ket{\psi} = \alpha  \ket{0} \ket{\psi_0}
                                        + \beta   \ket{1} \ket{\psi_1}
                                        + \gamma  \ket{2} \ket{\psi_2}
\end{equation}
where 
$\ket{\psi_0} = \ket{1} \otimes K \ket{2} - \ket{2} \otimes K \ket{1}$, 
$\ket{\psi_1} = \ket{2} \otimes K \ket{0} - \ket{0} \otimes K \ket{2}$ and 
$\ket{\psi_2} = \ket{0} \otimes K \ket{1} - \ket{1} \otimes K \ket{0}$.
Each of the $\ket{\psi_j}$ needs to be on the symmetric subspace of 
$\mathcal{H}_B \otimes \mathcal{H}_{B'}$.
Expressing $K$ as $\sum_{jk} k_{jk}\ketbra{j}{k}$ and
imposing $\swap \ket{\psi_1} = \ket{\psi_1}$ gives us that 
$k_{01} = k_{02} = 0$ and $k_{22} = -k_{11}$.
Doing the same with the other vectors we get that $k_{jk} = 0$ for any $j \ne k$, 
$k_{00} = -k_{22}$ and $k_{11} = -k_{00}$. 
The only possible solution to this is that $K$ vanishes, 
so no nonzero $K$ applied on $B'$ can give a vector which is on the symmetric subspace.
Hence, the state $\rho_{AB}$ cannot have a bosonic extension.
\end{example}

This means that there are states $\rho_{AB}$ with a symmetric extension that cannot be extended 
to a pure state on four systems $\ket{\psi}_{ABB^\prime R}$ in such a way that $\ket{\psi}_{ABB^\prime R} = \pm \swap \ket{\psi}_{ABB^\prime R}$. 
This condition means that the extension is bosonic ($+$) or fermionic ($-$), 
but some states with symmetric extension admit neither. 
One example is if $\rho_{AB}$ does not admit a fermionic extension and 
$\sigma_{AB}$ does not admit a bosonic extension.
Then the state $1/2(\proj{0}_{A'} \otimes \rho_{AB} + \proj{1}_{A'} \otimes \sigma_{AB})$
cannot admit bosonic nor fermionic extensions.



\section{Calculations leading to (13) and (14)}
\label{app:sufcalculation}

In this appendix we show that if we on a generic 3-qubit state $a \ket{000} + b\ket{001} + c\ket{010} + d\ket{011} + e\ket{100} + f\ket{101} + g\ket{110} + h\ket{111}$ impose that its reductions $\rho_B$ and $\rho_{B'}$ are equal, diagonal and not maximally mixed, then $|b| = |c|$ and $|f| = |g|$ and 
$|c||g| (\e^{\im(\phi_b - \phi_c)} - \e^{\im(\phi_f - \phi_g)} ) = 0$.

The two reduced density matrices of this generic state are in the computational basis 
\begin{align}
  \rho_B &= 
    \begin{bmatrix}
	|a|^2 + |b|^2 + |e|^2 + |f|^2  & a c^* + b d^* + e g^* + f h^* \\
        a^* c + b^* d + e^* g + f^* h  &  |c|^2 + |d|^2 + |g|^2 + |h|^2
    \end{bmatrix}\\
  \rho_B^\prime &= 
    \begin{bmatrix}
	|a|^2 + |c|^2 + |e|^2 + |g|^2  & a b^* + c d^* + e f^* + g h^* \\
        a^* b + c^* d + e^* f + g^* h  &  |b|^2 + |d|^2 + |f|^2 + |h|^2
    \end{bmatrix}
\end{align}

The equations we get are
\begin{subequations}
\begin{align}
  |b|^2 + |f|^2 = |c|^2 + |g|^2  \label{eq:cond1}\\
   a c^* + b d^* + e g^* + f h^* &= 0 \label{eq:cond2}\\
   a b^* + c d^* + e f^* + g h^* &= 0 \label{eq:cond3}
\end{align}
\end{subequations}
where the first is from the diagonal entries of $\rho_B$ being equal to those of $\rho_{B'}$ and the others from the off-diagonal elements being 0.

Assume that $|b| \ne |c|$, then by \eqref{eq:cond1} also $|f| \ne |g|$. From \eqref{eq:cond2} and \eqref{eq:cond3} one can then isolate $e$ and $h^*$:
\begin{subequations}
\label{subeq:eh}
\begin{align}
  e   &= \frac{a(b^*f   - c^*g)   + d^*(cf   - bg)  }{|g|^2 - |f|^2}\\
  h^* &= \frac{a(c^*f^* - b^*g^*) + d^*(bf^* - cg^*)}{|g|^2 - |f|^2}
\end{align}
\end{subequations}
From this one can compute $|e|^2 - |h|^2$ and by using \eqref{eq:cond1} this simplifies to
\begin{equation}
  \label{eq:mixedcondition}
  |e|^2 - |h|^2 = |d|^2 - |a|^2 .
\end{equation}
Taken together with \eqref{eq:cond1}, this is exactly the condition that the two diagonal elements in $\rho_B$ and $\rho_{B^\prime}$ are equal, so they are completely mixed.
If the subsystems are not completely mixed, then we must have $|b| = |c|$ and $|f| = |g|$, which is \eqref{eq:absrelationmain}.

Now we want to find the relations between the complex phases of $b$, $c$, $f$ and $g$. Denote $b = |b|\e^{\im\phi_b}$, $c = |c|\e^{\im\phi_c}$, $f = |f|\e^{\im\phi_f}$ and $g = |g|\e^{\im\phi_g}$. Multiplying \eqref{eq:cond2} by $g$, \eqref{eq:cond3} by $f$, taking the difference and using $|f| = |g|$ we obtain
\begin{equation}
  a(c^*g - b^*f) + d^*(bg - cf) = 0 
\end{equation} 
Since $|c||g| = |c||f| = |b||f| = |b||g|$, this becomes
\begin{equation}
  \e^{\im \phi_g}|c||g|(a \e^{-\im \phi_b} + d^*\e^{\im \phi_c})(\e^{\im(\phi_b - \phi_c)} - \e^{\im(\phi_f - \phi_g)}) = 0.
\end{equation} 
Then at least one of the following two equations must hold. Either 
\begin{equation}
\label{eq:phaserelationapp}
  |c||g| \left(\e^{\im(\phi_b - \phi_c)} - \e^{\im(\phi_f - \phi_g)} \right) = 0,
\end{equation}
which is \eqref{eq:phaserelationmain} that we want to show, \emph{or}
\begin{equation}
\label{eq:phasealternative}
  d^*\e^{\im \phi_c} = - a \e^{-\im \phi_b}.
\end{equation}
In the case that \eqref{eq:phaserelationapp} does not hold, \eqref{eq:phasealternative} must hold, and
we will now see that this case implies that subsystem $B$ is completely mixed. 

If we insert \eqref{eq:phasealternative} into \eqref{eq:cond3} and use $|b| = |c|$ and $|f| = |g|$, we obtain
\begin{equation}
  \label{eq:phasealternativeconsequence}
  h^* g = -e f^*.
\end{equation}
Since \eqref{eq:phaserelationapp} does not hold, $|f| = |g| \ne 0$ and therefore \eqref{eq:phasealternativeconsequence} implies $|e| = |h|$. 
The condition \eqref{eq:phasealternative} already means that $|a| = |d|$, so again we have that \eqref{eq:mixedcondition} holds so the diagonal terms in $\rho_B$ are equal and we are in the maximally mixed case.

Hence, \eqref{eq:phasealternative} cannot hold since $\rho_{B}$ and $\rho_{B'}$ are not maximally mixed and therefore \eqref{eq:phaserelationapp} which is the same as \eqref{eq:phaserelationmain} must hold.

\section{Inequality relations for Bell-diagonal states}
\label{app:belldiagproof}

In this appendix we show that each of the inequalities
\eqref{eq:twoqubitextensionbellpurity}-\eqref{eq:twoqubitextensionbellsquared} implies  at least one of \eqref{eq:rankoneZsufcond}-\eqref{eq:rank2Zposconstraint2} and vice versa. 
More precisely, \eqref{eq:rankoneZsufcond} and \eqref{eq:twoqubitextensionbellsquared} 
are equivalent, 
either of \eqref{eq:rank2Zposconstraint1} and \eqref{eq:rank2Zposconstraint2} implies \eqref{eq:twoqubitextensionbellpurity} 
while \eqref{eq:twoqubitextensionbellpurity} only implies that at least one of \eqref{eq:rankoneZsufcond}-\eqref{eq:rank2Zposconstraint2} is satisfied.

We first change variables in \eqref{eq:twoqubitextensionbellpurity} and \eqref{eq:twoqubitextensionbellsquared} so that they use the same parameters as \eqref{eq:rankoneZsufcond}-\eqref{eq:rank2Zposconstraint2}. 
This gives the two inequalities 
\begin{subequations}
\begin{gather}
\label{eq:twoqubitextensionbellpurityalpha}
\alpha_1^2 + \alpha_2^2 + \alpha_3^2 \leq 1\\
\label{eq:twoqubitextensionbellsquaredalpha}
4 \alpha_1 (\alpha_2^2-\alpha_3^2) - (\alpha_2^2 - \alpha_3^2)^2 - 4 \alpha_1^2 (\alpha_2^2 + \alpha_3^2) \geq 0.
\end{gather}
\end{subequations}
\eqref{eq:twoqubitextensionbellsquaredalpha} which comes from \eqref{eq:twoqubitextensionbellsquared} the same as \eqref{eq:rankoneZsufcond}
so they are all equivalent. 

Next, we prove that either of \eqref{eq:rank2Zposconstraint1} and \eqref{eq:rank2Zposconstraint2} implies \eqref{eq:twoqubitextensionbellpurityalpha} and therefore also \eqref{eq:twoqubitextensionbellpurity}. 
Each of \eqref{eq:rank2Zposconstraint1} and \eqref{eq:rank2Zposconstraint2} can be split into two inequalities for the cases when the variable inside the absolute value is negative or nonnegative.
For each of the four inequalities 
an orthogonal change of variables allows us to express them on a standard form. 
The transformation for \eqref{eq:rank2Zposconstraint1} is 
$\alpha_1 = \sqrt{2/3} x - \sqrt{1/3} y$, 
$\alpha_2 = \pm \sqrt{1/3} x \pm \sqrt{2/3} y$ and 
$\alpha_3 = z$,
for the cases $\pm \alpha_2 \geq 0$. 
The transformations for \eqref{eq:rank2Zposconstraint2} are obtained by interchanging $\alpha_2$ with $\alpha_3$ and $\alpha_1$ with $-\alpha_1$. 
All four inequalities then become simply $x^2 + z^2 \leq 2y^2$.
The purity condition \eqref{eq:twoqubitextensionbellpurityalpha} becomes 
$x^2 + y^2 + z^2 \leq 1$
for all the transformations.
By noting that for each transformation one of the positivity conditions for the eigenvalues translates into 
$y \leq 1/\sqrt{3}$, 
we get $x^2 + y^2 + z^2 \leq 3 y^2 \leq 1$.

The last implication we need to show is that any state that satisfies
$\tr(\rho_{AB}^2) \leq 1/2$, 
or equivalently \eqref{eq:twoqubitextensionbellpurityalpha}, 
also satisfies at least one of \eqref{eq:rankoneZsufcond}--\eqref{eq:rank2Zposconstraint2}.
For this we use the proven fact that these inequalities are necessary and sufficient conditions for the state to have a symmetric extension, and therefore the set must be convex. 
Any state that satisfies \eqref{eq:twoqubitextensionbellpurity} can be written as a convex combination of states that satisfies $\tr{\rho_{AB}^2} = 1/2$, e.~g.~$(1-q_I)\proj{\Phi^+} + q_I \rho_{AB}$, $(1-q_X)\proj{\Psi^+} + q_X \rho_{AB}$, etc. for the $q_j$ that give the right purity.
Since the determinant of a state always is non-negative, 
all these extremal states satisfy \eqref{eq:twoqubitextensionbellsquared}, 
and therefore also \eqref{eq:rankoneZsufcond}, 
so they must have a symmetric extension.
Convex combinations of states with symmetric extension also have symmetric extension, 
so any state with $\tr(\rho_{AB}^2) \leq 1/2$ has symmetric extension and therefore satisfies one of 
\eqref{eq:rankoneZsufcond}--\eqref{eq:rank2Zposconstraint2}.

\section{Equivalence for Z-correlated states}
\label{app:Zcorrcomp}

In this appendix we show that for states of the class \eqref{eq:Zcorrform} with $y = 0$ and $p_4 \geq p_3$, 
the necessary and sufficient conditions for symmetric extension from lemma \ref{lem:Zcorrnecsuf} simplify to \eqref{eq:Zcorrboundwithyzero}.
Next, we show that conjecture \ref{conj:twoqubitextension} also reduces to \eqref{eq:Zcorrboundwithyzero} for this class of states.

Since $y = 0$, \eqref{eq:ycondst} is satisfied for any $s \in [0,p_3]$ and $t \in [0,p_2]$.
Our only objective is therefore to maximize the right hand side of \eqref{eq:xcondst},  
$f(s,t) := \sqrt{s}\sqrt{p_1-t} + \sqrt{t}\sqrt{p_4-s}$, 
on this domain.
Without the constraints on $s$ and $t$, this reaches its maximum value of $\sqrt{p_1 p_4}$ for any value of $(s,t)$ that satisfies
$p_1 s + p_4 t = p_1 p_4$.
Since $s \leq p_3$ and $t \leq p_2$ this maximum value may or may not be obtainable.
The maximum value of $p_1 s + p_4 t$ is $p_1 p_3 + p_2 p_4$, 
so if $p_1 p_3 + p_2 p_4 \geq p_1 p_4$, then $x = \sqrt{p_1 p_4}$ can be obtained by choosing $s = p_3$ and $t = (p_1 p_4 - p_1 p_3)/p_4 \leq p_2$.
When $p_1 p_3 + p_2 p_4 < p_1 p_4$, however, we will have 
$f(s,t) < \sqrt{p_1 p_4}$
for all possible $(s,t)$.
In this case the optimal choice of $(s,t)$ is $(p_3, p_2)$, since in the region where
$p_1 p_3 + p_2 p_4 < p_1 p_4$ the $f(s,t)$ increases both when $s$ and $t$ increases.
The maximum value for $x$ is then $\sqrt{p_3}\sqrt{p_1-p_2} + \sqrt{p_2}\sqrt{p_4-p_3}$.
Summing up, a state of the form \eqref{eq:Zcorrform} with $y = 0$ 
has a symmetric extension if and only if
\begin{equation}
\label{eq:Zcorrboundwithyzeroapp}
 x \leq \left\{ 
\begin{aligned}
 &\sqrt{p_1 p_4} 
    \qquad \text{for} \quad p_1 p_3 + p_2 p_4 \geq p_1 p_4 \\
 &\sqrt{p_3}\sqrt{p_1-p_2} + \sqrt{p_2}\sqrt{p_4-p_3}  
    \quad \text{otherwise}
\end{aligned}
\right.
\end{equation}
which is the same as \eqref{eq:Zcorrboundwithyzero}.

The remaining part is to show that the condition \eqref{eq:twoqubitextension} from conjecture \ref{conj:twoqubitextension} is equivalent to this. 
The condition is equivalent to at least one of the following two inequalities holding
\begin{subequations}
\begin{gather}
  \label{eq:twoqubitextensionsplit1}
  \tr(\rho_{AB}^2) - \tr(\rho_{B}^2) \leq 0 \\
  \label{eq:twoqubitextensionsplit2}
  4 \sqrt{\det{\rho_{AB}}} \geq |\tr(\rho_{AB}^2) - \tr(\rho_{B}^2)|.
\end{gather}
\end{subequations}
Since $y=0$, we get $\det(\rho_{AB}) = p_2 p_3(p_1 p_4 - x^2)$
and $\tr(\rho_{AB}^2) - \tr(\rho_B^2) = 2(x^2 - p_1 p_3 - p_2 p_4)$.
Inserting this into \eqref{eq:twoqubitextensionsplit1} and \eqref{eq:twoqubitextensionsplit2} and solving for $x$ gives
\begin{subequations}
\begin{align}
  \label{eq:twoqubitextensionsplit1x}
  x &\leq \sqrt{p_1 p_3 + p_2 p_4} \\
  \label{eq:twoqubitextensionsplit2x}
  x &\leq \sqrt{p_3}\sqrt{p_1-p_2} + \sqrt{p_2}\sqrt{p_4-p_3}.
\end{align}
\end{subequations}
Only one of these inequalities have to be satisfied for a state with symmetric extension, so the upper bound on $x$ is the maximum of the two.
By comparing the two bounds we find in which region each of the two is valid, and get
\begin{equation}
 x \leq \left\{ 
\begin{aligned}
 &\sqrt{p_1 p_3 + p_2 p_4} 
    \qquad \text{for} \quad p_1 p_3 + p_2 p_4 \geq p_1 p_4 \\
 &\sqrt{p_3}\sqrt{p_1-p_2} + \sqrt{p_2}\sqrt{p_4-p_3}  
    \quad \text{otherwise}
\end{aligned}
\right.
\end{equation}
The only region where $\sqrt{p_1 p_3 + p_2 p_4}$ is the valid upper bound is when it is greater than $\sqrt{p_1 p_4}$. 
Since $x$ never can exeed $\sqrt{p_1 p_4}$ for any state, this is the same as \eqref{eq:Zcorrboundwithyzeroapp}.

\bibliography{oqctbiblio,2-qubit-condition,symextbell}

\end{document}